\def\A{\mathcal{A}} \def\cala{\mathcal{A}}
\def\H{\mathcal{H}}
\def\AH{\mathcal{A}^{\mathcal{H}}}
\def\AC{\mathcal{A}^{\mathcal{C}}}
\def\Re{\mathbf{R}}
\def\ep{\varepsilon}
\def\w{\omega}
\def\one{\mathbf{1}}
\def\dminexp{\ul d}
\def\maxval{\bar \pi}
\def\intoo{\twoheadrightarrow}
\def\argmax{\mbox{argmax}}
\def\supp{\mbox{supp}}
\def\tu{\tilde{u}}
\def\al{\alpha}
\def\la{\lambda}
\def\bla{\Lambda}
\def\os{\emptyset}
\newcommand{\df}[1]{\textit{#1}}
\newcommand{\norm}[1]{\| #1 \|}
\def\ul{\underline}
\definecolor{Blue}{rgb}{1.,0.,0.}
\newdimen\slantmathcorr
\def\oversl#1{
\setbox0=\hbox{$#1$}
\slantmathcorr=\wd0
\hskip 0.2\slantmathcorr \overline{\hbox to 0.8\wd0{%
\vphantom{\hbox{$#1$}}}}
\hskip-\wd0\hbox{$#1$}
}
\def\undersl#1{
\setbox0=\hbox{$#1$}
\slantmathcorr=\wd0
\underline{\hbox to 0.8\wd0{%
\vphantom{\hbox{$#1$}}}}
\hskip-0.8\wd0\hbox{$#1$}
}
\theoremstyle{plain}
\newtheorem{theorem}{Theorem}
\newtheorem{lemma}{Lemma}
\newtheorem{corollary}{Corollary}
\newtheorem*{claim*}{Claim}
\theoremstyle{remark}
\newtheorem{example}{Example}
\newtheoremstyle{named}{}{}{\itshape}{}{\bfseries}{.}{.5em}{#1\thmnote{
    #3}}
\theoremstyle{named}
\newtheoremstyle{named2}{}{}{\itshape}{}{\bfseries}{:}{.5em}{#1\thmnote{
    #3}}
\theoremstyle{named2}
\begin{document}

\author[Echenique]{Federico Echenique}
\author[Miralles]{Antonio Miralles}
\author[Zhang]{Jun Zhang}

\address[Echenique]{Division of the Humanities and Social Sciences, California Institute of Technology}
\address[Miralles]{Department of Economics, Universit\`{a} degli Studi di Messina, Universitat Aut\`{o}noma de Barcelona and Barcelona Graduate School of Economics.}
\address[Zhang]{Institute for Social and Economic Research, Nanjing Audit University, Nanjing, China.}

\title[Fairness and efficiency]{Fairness and efficiency for allocations with participation constraints}

\thanks{We thank Eric Budish, Fuhito Kojima, Andy McLennan, Herv\'e Moulin, and Tayfun S\"onmez for comments.  Echenique thanks the National Science Foundation for its support through the grants SES-1558757 and CNS-1518941. Miralles acknowledges financial support from the Spanish Ministry of Economy and Competitiveness (ECO2017-83534-P,) the Catalan Government (2017 SGR 711,) and the Severo Ochoa Programme (SEV-2015-0563). Zhang thanks the financial support from the National Natural Science Foundation of China (Grant \#71903093).}

\date{May 2020}

\begin{abstract} We propose a notion of fairness for allocation problems in which different agents may have different reservation utilities, stemming from different outside options, or property rights. Fairness is usually understood as the absence of envy, but this can be incompatible with reservation utilities. It is possible that Alice's envy of Bob's assignment cannot be remedied without violating Bob's participation constraint.  Instead, we seek to rule out {\em justified envy}, defined as envy for which a remedy  would not violate any agent's participation constraint.
We show that fairness, meaning the absence of justified envy, can be achieved together with efficiency and individual rationality. We introduce a competitive equilibrium approach with price-dependent incomes obtaining the desired properties.
\end{abstract}



\maketitle

\color{black}
\newpage

\section{Introduction}

We investigate the meaning of fairness in allocation environments with participation constraints and constrained consumption spaces. A special case is the random allocation problem in which agents have unit demand. Without participation constraints, we may regard all agents equally, and the absence of envy is a natural notion of fairness. In our model, different agents may have different reservation utilities, stemming from outside options or property rights. Participation constraints ensure that agents get at least their reservation utilities. Absence of envy may be incompatible with agents' participation constraints. In such environments,  what does it mean to treat agents fairly?

It is well known that allocations satisfying both efficiency and envy-freeness exist \citep{VARIAN197463,HZ1979}. In a model with participation constraints, the challenge is to make efficient and envy-free allocations compatible with agents' individual rationality.  Our contribution is threefold. Our first contribution is to propose a notion of fairness that combines envy and individual rationality. We prove (Theorem \ref{thm:existence}) the existence of fair, efficient, and individually rational allocations. Our second contribution is to show that these fair and efficient outcomes can, under certain conditions, be viewed as {\em market outcomes} (Theorem \ref{thm:NJEeq}), as in Varian and Hylland-Zeckhauser. Our third contribution (Theorem \ref{thm:existenceconstraints}) is to accommodate quantitative constraints, such as those in course allocations (e.g all students must take at least two math courses), or controlled school choice (e.g a school seeks certain diversity objectives).  

We understand fairness as the absence of justified envy, or as ``ruling out envy that can be remedied within agents' individual rationality constraints.'' We do not want to say that an outcome is unfair if its unfairness can be traced to differences in agents' reservation utilities. Concretely, Alice envies Bob at an allocation $x$ if she would rather have Bob's assignment in $x$ than hers. To decide whether this envy is justified, we consider the possibility of swapping the assignments between Alice and Bob, since swapping is an obvious remedy for Alice's envy. We say that Alice's envy is justified if Bob could have received Alice's assignment without violating his participation constraint, and unjustified if Alice's assignment would put Bob below his reservation utility.\footnote{Our fairness notion is similar to the concepts introduced by  \cite{Yilmaz2010} and \cite{athanassoglou2011house} for object allocation problems with ordinal preferences. See Section~\ref{sec:RL} for a discussion. } 

Our notion of envy presumes that the obvious remedy for Alice's envy towards Bob is for them to switch assignments. Clearly, if Alice wants to bring the matter to court, the most natural and plausible remedy she could offer is for the two of them to switch assignments. One might devise more complicated remedies, with a fuller reallocation that would seek to eliminate Alice's envy, but these would necessarily be complicated and require Alice's complaint to rely on multiple agents.  That said, our methods {\em do accommodate more general remedies} (Theorem~\ref{thm:cyclicenvy}).

Importantly, our notion of fairness is compatible with efficiency.
We show that, under some conditions, our solution can be achieved as a market outcome. The idea seeks to generalize Varian's and Hylland and Zeckhauser's competitive equilibrium from equal incomes. The obvious solution would be to endogenize incomes. To this end, we construct price-dependent income functions. We have to be careful since, as shown by  \cite{HZ1979}, when incomes depend on prices, a Walrasian equilibrium might not exist. Our careful construction of income functions ensures individual rationality and fairness. This construction could be regarded as a minimal deviation from equal incomes that sustains individual rationality and no satiated agent overspending. If Alice envies Bob, then Bob's maximum achievable utility is his reservation utility (Lemma \ref{lem:prepenvy}). Besides, if Alice has less money than Bob and she does not envy him, then she has just enough money to reach satiation. We provide an informal description of the income-function construction in Section~\ref{sec:CE}.

We organize the paper as follows. We discuss related literature in Section \ref{sec:RL}, present our model and fairness notion in Section \ref{sec:model}, and present main theorems in Section \ref{sec:main}. We extend our theorems to allocation environments with constraints in Section \ref{sec:constraints}, and show that our theorems can account for more general remedies for envy in Section~\ref{sec:discussion}. In Section \ref{sec:schoolchoice} we apply our result to school choice.

\section{Related literature}\label{sec:RL} Efficiency and fairness can be achieved in models without reservation utilities. Examples are the solutions of \cite{VARIAN197463}, \cite{HZ1979} and \cite{bogomolnaia2001new}. Our problem is complicated, both conceptually and technically, by individual rationality constraints. Conceptually, the meaning of fairness among unequal agents is not obvious, while technically, implementation through market equilibrium may not be possible in economies with price-dependent incomes (see \cite{HZ1979}). Part of our contribution is to support fair, efficient and individually rational (IR) outcomes as competitive pseudo-market equilibria, as in Hylland-Zeckhauser. 

Our notion of no justified envy is analogous to similar notions developed by \cite{Yilmaz2010} and \cite{athanassoglou2011house}. They assume that agents have ordinal preferences instead of cardinal utilities, and say that agent $i$ justifiably envies agent $j$ if $i$ does not regard her allocation as first-order stochastically dominating $j$'s, while any object with positive probability in her allocation is acceptable to $j$.  Yilmaz considers the house allocation with existing tenants model in which some agents have deterministic endowments. He focuses on extending the probabilistic serial rule \citep{bogomolnaia2001new}. Athanassoglou and Sethuraman consider the fractional endowment environment. Their purpose is to extend Yilmaz's mechanism and fairness notion. We work with cardinal utility, focus on market equilibrium instead of probabilistic serial, and use very different techniques. But we share some conceptual similarities with them that extend beyond the similarity in the definition of justified envy. These authors suggest a cake-eating algorithm that starts with all agents eating at the same speed, but  when an agent is at risk of violating her IR constraint, only this agent has the right to eat, until she reaches her reservation utility or drops out of the algorithm. So only when IR binds for some agent is she allowed to eat at higher speed than the others. In our competitive equilibrium method (see Theorem~\ref{thm:NJEeq}), our income functions seek to achieve similar ideas. Fairness pushes us towards equal incomes, but IR forces us to accept some inequality.

\cite{schmeidler1972fair} consider a model where IR constraints arise due to the presence of endowments. Starting from an initial endowment, they study \emph{fair net trades}: trades leading to a Pareto optimal allocation in which no agent envies the trades made by others. Our model differs from theirs for two reasons. First, our model is primarily designed to address constrained consumption spaces, as in the random allocation problem. Fair net trades may not be feasible in such environments, leading to a weak notion of fairness. Specifically, under unit demand constraints there is no reason that one agent's net trade is feasible to any other agent.\footnote{Suppose Bob is endowed with $1/3$ of good 1 and gets all of it; Alice is endowed with $2/3$ of good 1. Then Alice can never envy Bob's net trade, as adding Bob's net trade to her endowment, $2/3+(1-1/3)>1$, violates her unit demand constraint. Schmeidler and Vindt's notion was never meant to be used in (what we term) random allocation problems.} Second, while reservation utilities in our model can arise due to the presence of endowments, they may also stem from other sources. 

In the fractional endowment model, \cite{Zhangfractional} propose algorithms that generalize Top Trading Cycle to organize endowment exchange. Their algorithms find ordinally efficient, fair and individually rational allocations. We differ not only in the use of cardinal utilities but also in the absence of preexisting endowments.

\cite{Kotowski2019exclusion} explore the role of endowments for discrete allocation problems. Different from us, they interpret endowments as the rights to exclude others, and propose a new cooperative game solution concept. Although they allow for public ownership, or collective ownership by subgroups, endowments in their model are deterministic. As a result, their results are unrelated to ours.

Our results are applicable to school choice when we wish to use endowments instead of priorities to control children's rights towards schools. It is in particular applicable to  controlled school choice.  School choice was first introduced as an application of resource allocation models by \cite{abdulkasonmez}. In the standard model of school choice, fairness and efficiency are generally incompatible. A lot of the school choice literature has been devoted to the resulting trade-off. In our solution, the trade-off is resolved.
\cite{hamada2017strategy} is the only paper we are aware of that emphasizes endowments in school choice. They assume that each child owns one seat of some school as endowment. Their goal is to design strategy-proof allocation mechanisms to meet the distributional constraint in the market and  IR constraint of each child. Since they consider deterministic endowments and ordinal preferences, and their fairness notions are based on priorities, their results are unrelated to ours. The constraints we analyze have been discussed extensively in the literature on controlled school choice: see  \cite{ehlers2010}, \citet{kojima2010school}, \cite{hafalir2013effective}, \citet{ehlers2011school}, and \cite{echenique2015control}; and the literature on  distributional constraints (motivated by geographic distributional considerations): \cite{kamada2015efficient}, \cite{kamada2017stability}, and \cite{kamada2019fair}, among others. Our approach of eliminating justified envy when it does not conflict with constraints is common to those papers. For example, Kamada and Kojima consider matchings where no blocking pair that would not violate distributional constraints are present.

In separate work \citep{echenique2019constrained}, we give a direct Walrasian approach to allocation problems with constraints. Key is the idea of setting a price for each constraint. We properly embed constraint prices into each agent's budget constraint.

\section{The model}\label{sec:model}

\subsection{Notation and preliminary definitions.} 

Each agent $i$ has an associated maximum overall demand $c^i\in\Re_{++}$. We define the $c^i$-simplex in $\Re^n$ as $\{x\in\Re^n_+:\sum_{j=1}^nx_j= c^i \}$, denoted by $\Delta^n(c^i)\subseteq\Re^n$. The set $\{x\in\Re^n_+:\sum_{j=1}^nx_j\leq c^i\}$ is denoted by $\Delta_{-}^n(c^i)$. When $n$ is understood, we simply use the notation $\Delta(c^i)$ and $\Delta_{-}(c^i)$. We shall use the shortened notation $C^i=\Delta_{-}(c^i)$ for agent $i$'s \df{consumption space}. When $c^i=1$ for all $i$, we say that agents have \emph{unit demand,} and that we are facing a \emph{random allocation problem}. For $c^i$ large enough for all $i$, the allocation problem is indistinguishable from a standard exchange economy, where the consumption space is $\Re_+^n$.

We adopt the notational conventions used in convex analysis: Denote by $\Re_*=\Re\cup\{-\infty \}$ the extended real numbers. A function $u:\Re^n\to \Re_*$ has \df{domain} $C=\{x\in\Re^n:u(x)>-\infty \}$.\footnote{We use the conventions that are standard in convex analysis. Note that $x+(-\infty)=-\infty$, and $\la \cdot (-\infty)=-\infty$ for any scalar $\la$.}
A function $u:\Re^n \rightarrow \Re$ is 
\begin{itemize}
\item \df{quasi-concave} if $\{z\in \Re^n:u(z)\geq u(x) \}$ is a convex set, for all $x\in \Re^n$.
 \item \df{concave} if, for any $x,z\in \Re^n$, and $\la\in (0,1)$,
  $\la u(z)+(1-\la) u(x)\leq  u(\la z+(1-\la)x)$;
\item \df{linear} if we can identify
  $u(\cdot)$ with a vector $v\in\Re^n$, so that $u(x)=v\cdot x$ on the domain of $u$.  For random allocation problems, linear utility is interpreted as an \df{expected utility function}.
  \item \df{Lipschitz continuous} with constant $\theta>0$ if for any $x,z\in C$, $|u(z)-u(x)|< \theta \norm{z-x}$.
  \item satisfying the \df{Inada property} (at the axes) if, for any $x$ in its domain, $u(x)=u(0)$, unless $x\gg 0$.
  \item has $l$ as its \df{favorite object} if, on its domain, decreasing consumption of any object $k\neq l$ by any amount $\epsilon>0$ in favor of an increased consumption of $\epsilon$ of object $l$ always leads to an increase in $u$.  For example, if $u$ is differentiable and $C=\Re^n_{++}$, then $ l $ is a favorite object if $\frac{\partial u(x)}{\partial
x_{l}}>\frac{\partial u(x)}{\partial x_{k}},$ $\forall k\neq l.$ If $u$ is linear, identified with $v\in\Re^n$, then $ v_l>v_k $, $\forall k\neq l.$
\end{itemize}

Throughout we work with functions that have a closed and convex domain. A function $u$ with domain $C$ is said to be continuous, and monotone if it is continuous (in the relative topology on $C$) at every point $x\in C$, and for any $x,y\in C$, $x<y$ implies that $u(x)<u(y)$.

\subsection{Model} A finite set of agents are to be assigned a finite set of objects. We assume that objects are perfectly divisible. In the random allocation problem, we would be allocating probabilities, and preferences would be defined on the set of probability distributions.

An \df{allocation problem} is a tuple $ \Gamma=\{O, I,Q,(c^i,u^i,\tu^i)_{i\in I}\} $, where:  
\begin{itemize}
\item $ O=\{1,\ldots,L\} $ represents a set of objects, or goods.
\item $I=\{1,\ldots,N\}$ represents a set of agents.
\item $ Q=(q_l)_{l\in O} $ is a capacity vector, and $ q_l\in \Re_{++} $ is the quantity of object $l$. We assume that $\sum_{l\in O} q_l\leq \sum_{i\in I} c^i$ (no overall excess supply.)
\item For each agent $i$, $c^i>0$ determines her maximum overall demand. Her consumption space is $C^i=\Delta_{-}(c^i)$. 
\item For each agent $i$, $u^i:\Re^n\rightarrow \Re_*$ is a continuous and monotone utility function with $C^i$ as its domain. 
\item For each agent $i$, $\tu^i\in \Re$ is her \df{reservation utility}. 
\end{itemize} 

We say that $\Gamma$ admits a \df{common favorite object} if there is an object $l$ that is a favorite object for every agent $i$.

\subsection{Allocations}
An \df{allocation} in $\Gamma$ is a vector $x\in\Re^{LN}_+$, which we write as $x=(x^i)_{i=1}^N$, such that
\[ 
x^i\in C^i \text{ for all }i\in I, \text{ and }\sum_{i\in I} x^i_l= q_l \text{ for all }l\in O.
\] 

Let $ \cala $ be the set of all allocations. Clearly,  $ \cala $ is nonempty, compact and convex.

In the random allocation problem, where $\sum_l q_l= N$, each agent's assignment is a probability distribution over $O$. When $ x^i_l\in \{0,1\} $ for all $i\in I $ and all $ l\in O $, $ x $ is a deterministic allocation. The Birkhoff-von Neumann theorem \citep{birkhoff1946three,von1953certain} implies that every allocation is a convex combination of deterministic allocations.

\subsection{Individual rationality and Pareto optimality.}
An allocation $x$ is \textit{acceptable} to agent $i$ if $u^i(x^i)\ge \tu^i $; $x$ is \df{individually rational} (IR) if it is acceptable to all agents.

We assume that reservation utilities are such that an IR allocation exists. We say that $\Gamma$ admits a \emph{strictly positive IR allocation} if there is an IR allocation $\tilde{x}\in\Re^{LN}_{++}$. All agents obtain strictly positive quantities of all goods in $\tilde{x}$.\footnote{A sufficient condition for this is the existence of an allocation $x$ with $u^i(x^i)>\tu^i$ for all $ i\in I $.}

An allocation $ x $ is \df{weakly Pareto optimal} (wPO) if there is no allocation $y$ such that $ u^i(y^i)> u^i(x^i) $ for all $ i $.
An allocation $ x $ is \df{Pareto optimal} (PO) if there is no allocation $y$
such that $ u^i(y^i)\geq u^i(x^i) $ for all $ i $ with at least one strict inequality.  Given the bounded consumption spaces in our model, wPO is compatible with wasteful situations where one can use existing resources to make some agents strictly better off, but cannot make all agents strictly better off because some agents are satiated.

\subsection{Fairness}
We regard agents as having the right to attain their reservation utilities. While typically reservation utilities arise from endowments, we do not consider endowments as the only source. Guaranteed reservation utilities can arise from a policy that protects disavantaged groups in school choice, for example.

Our notion of fairness rules out envy that cannot be justified by guaranteed reservation utilities. 
If an agent $i$ envies another agent $ j $ at an allocation $ x $ (that is, $ i $ prefers $ x^j $ to $ x^i $), our fairness notion regards the envy as not justified if switching their allocations would violate $j$'s participation constraint.

Formally, we say that an agent $i$ has \df{justified envy} towards another agent $j$ at an allocation $x$ if 
\[
u^i(x^j) > u^i(x^i) \text{ and } u^j(x^i) \ge \tu^j,
\]
and the justified envy is said to be \df{strong} if $ u^j(x^i) > \tu^j $. In words, if $i$ envies $j$ and $j$ could have received $i$'s assignment without violating $j$'s participation constraint, the envy is justified. 

We say that $ x $ has \df{no (strong) justified envy} (N(S)JE) if no agent has (strong) justified envy towards any other agent at $x$.  

Fairness as NJE provides defense against possible complaints. Imagine a social planner that proposes an IR allocation $x$. Suppose an agent $i$ complains that she envies another agent $j$. An obvious remedy for $i$'s complaint is a pairwise switch of their assignments. But if the envy is not justified, the planner's response is that the switch would violate $j$'s right to attain her reservation utility. 

Of course, one may imagine complaints that could be remedied through rearrangements more complicated than a pairwise switch. Such remedies may or may not be realistic, but in any case our methods easily accommodate much more general remedies. Specifically, one can devise cyclic rearrangements, where arbitrarily long sequences of agents collaborate in the satisfaction of an agent's envy, as long as the last agent's participation constraint is not violated. Theorem~\ref{thm:cyclicenvy} in Section~\ref{sec:cyclicenvy} extends our main result to cover this case.

In an IR and NJE allocation $x$, if $ u^i=u^j $ and $\tu^i\ge \tu^j$, then it must be that $u^i(x^i)\ge u^j(x^j) $. That is, if  $ i$ and $j $ have equal preferences and $ i
$'s reservation utility is weakly higher than $ j $'s, then both agree that $ i $'s assignment
in $ x $ is also weakly better than $ j $'s. In particular, if $ u^i=u^j $ and $\tu^i=\tu^j $, then it must be that $ u^i(x^i)=u^j(x^j) $. So NJE and IR imply \textit{equal treatment of equals} (called symmetry by \cite{ZHOU1990123}). 

\subsection{Approximation}  Our main results will prove that there exist allocations that satisfy individual rationality, Pareto optimality, and no justified envy. More precisely, some of our results are based on approximations of these properties: for any $ \ep>0 $, an allocation $ x $ satisfies
\begin{itemize}
	\item  \df{$ \ep
		$-individual rationality} ($ \ep $-IR) if $ u^i(x^i)\ge \tu^i-\ep $ for all $ i\in I $;
	
	\item \df{$\ep$-Pareto optimality} ($\ep$-PO) if there is no
	allocation $y$ such that $ u^i(y^i)> u^i(x^i) +\ep$ for all $ i\in I $.
	
	\item \df{no $ \ep $-justified envy} ($ \ep $-NJE) if there do not exist two distinct agents $ i,j $ such that $ u^i(x^j) > u^i(x^i) $ and $ u^j(x^i) >\tu^j-\ep $.
\end{itemize}

It is clear that $ \ep $-IR is weaker than IR, $ \ep $-PO is weaker than wPO, and $ \ep $-NJE is stronger than NJE.
 
\section{Main results}\label{sec:main}

Let $\Gamma=\{O, I,Q,(c^i,u^i,\tu^i)_{i\in I}\} $ be an allocation problem under the assumptions specified above (that is, utility functions are continuous and monotone, and there exists an IR allocation).

\begin{theorem}\label{thm:existence} Suppose that agents' utility  functions in $\Gamma$ are concave.
	\begin{enumerate}
		\item  For any $ \ep>0 $, there exists an allocation
		that is $ \ep $-individually rational, $\ep$-Pareto optimal and has no $\ep$-justified envy; 
		\item There exists an allocation that is individually
		rational, weak Pareto optimal and has no strong
		justified envy.
		\item\label{it:exists_linear} If utilities are linear, there exists an allocation that is individually rational, Pareto optimal and has no strong justified envy.
\end{enumerate}  
\end{theorem}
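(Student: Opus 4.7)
The plan is to implement the pseudo-market scheme previewed in Section~\ref{sec:CE}. For each price vector $p$ in the unit simplex of $\R^L$, assign each agent $i$ an income $w^i(p)$ that equals the equal-income baseline $\bar w(p)=p\cdot Q/N$ except that it is raised to the minimum expenditure $\underline{w}^i(p)=\min\{p\cdot y:y\in C^i,\,u^i(y)\geq \tu^i\}$ needed to reach $\tu^i$ whenever the latter would exceed $\bar w(p)$; a second adjustment caps the spending of satiated agents so that they cannot burn money in the way that destroys market clearing in the \cite{HZ1979} non-existence example. Demand $D^i(p)$ is the set of $u^i$-maximizers on the budget $\{y\in C^i:p\cdot y\leq w^i(p)\}$, and an equilibrium price is one at which a selection from $(D^i(\cdot))_i$ sums to the capacity vector. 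The target allocation is the equilibrium assignment.

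For part (1), I apply Kakutani's theorem to a regularized excess-demand correspondence. Because $w^i(p)$ kinks where the IR floor activates and satiation can create discontinuities in demand, I first smooth $w^i(p)$ and truncate the utilities so they become strictly monotone in a neighborhood of the equilibrium. Each regularized economy admits a Walrasian equilibrium by standard arguments, and the allocation satisfies the $\ep$-versions of the three properties. The NJE swap computation is the heart: if $u^i(x^j)>u^i(x^i)$ at equilibrium, then $x^j$ lies outside $i$'s budget, so $p\cdot x^j>w^i(p)\geq p\cdot x^i$; hence $w^j(p)\geq p\cdot x^j>w^i(p)\geq\bar w(p)$, which by the income-function construction forces $w^j(p)=\underline{w}^j(p)$. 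Thus $p\cdot x^i<\underline{w}^j(p)$, so $x^i$ cannot deliver $\tu^j$ to $j$ and $u^j(x^i)<\tu^j$. Approximate IR comes from the lifted income, and approximate PO from a first-welfare-theorem argument on the regularized economy, with the $\ep$ slack absorbing the regularization error.

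For part (2), let $\ep_n\downarrow 0$ and extract, by compactness of $\cala$, a convergent subsequence of the associated $\ep_n$-allocations. Continuity of $u^i$ promotes $\ep$-IR to exact IR at the limit. The no-strong-justified-envy condition is preserved because $u^j(x^i)>\tu^j$ combined with $u^i(x^j)>u^i(x^i)$ is an open condition: any violation at the limit would persist on a tail with uniform slack, contradicting $\ep_n$-NJE for large $n$. The same openness argument converts $\ep$-PO into wPO in the limit.

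For part (3), linearity removes satiation on the relative interior of $C^i$, so the regularization becomes unnecessary and the fixed-point argument produces a genuine Walrasian equilibrium with exact IR and NSJE via the unregularized swap argument above. PO then follows from a first-welfare-theorem argument: for linear $u^i$, any purported weak Pareto improvement $y$ over the equilibrium $x^*$ satisfies $p^*\cdot y^i\geq p^*\cdot x^{*i}$ with strict inequality for at least one strictly improved agent, which contradicts market clearing. The principal obstacle throughout is the \cite{HZ1979} non-existence phenomenon for markets with price-dependent incomes; it forces the regularization in (1), blocks upgrading wPO to PO in the concave case, and explains why part (3) demands the additional structure of linearity to achieve the sharp conclusion.
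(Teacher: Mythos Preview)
Your plan diverges substantially from the paper's. For parts (1) and (2) the paper does \emph{not} use the pseudo-market; it uses a weighted-utilitarian/KKM argument on the simplex of welfare weights (optimize $\sum_i\lambda^iu^i(x^i)-\delta\sum_i\|x^i-\one\|$ over $\ep$-IR allocations, show the sets $\bla^i$ of ``no-$\ep$-justified-envy-for-$i$'' weights form a KKM cover, and take $\la^*\in\cap_i\bla^i$). The market route is reserved for Theorem~\ref{thm:NJEeq}, and is invoked for part (3) only after adding an artificial common favorite good and taking a limit.

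There is a genuine gap in your market sketch for (1)--(2). Your income rule ``equal baseline $\bar w(p)=p\cdot Q/N$, raised to $\underline w^i(p)$ when needed, capped for satiated agents'' does not, as stated, make $\sum_i w^i(p)=p\cdot Q$; raising some incomes without a compensating mechanism destroys Walras' law, and the satiation cap need not restore it (there may be no satiated agents at the relevant prices). The paper's fix is nontrivial: it sets $m^i(p)=\text{median}\{e^i(\tu^i,p),\,m^*,\,e^i(v^i,p)\}$ where $m^*$ is chosen so that the medians sum to $p\cdot Q$, and then proves (Lemma~\ref{lem:prepenvy}) that if $m^i(p)<\min\{m^j(p),e^i(v^i,p)\}$ then $m^j(p)=e^j(\tu^j,p)$. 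Your NJE chain $w^j(p)>w^i(p)\geq\bar w(p)$ presumes a fixed baseline that the balanced construction does not have. More seriously, even with the correct income functions the passage from quasi-equilibrium to equilibrium requires one of the three hypotheses of Theorem~\ref{thm:NJEeq} (large $c^i$, Inada plus $\tu^i>u^i(0)$, or a common favorite object with a strictly positive IR allocation); none of these is assumed in Theorem~\ref{thm:existence}, and ``regularization'' alone does not manufacture them. This is precisely why the paper abandons the market for (1)--(2) and uses KKM instead.

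For part (3) your claim that linearity ``removes satiation on the relative interior of $C^i$'' is false in this model: with $C^i=\Delta_-(c^i)$ every agent is satiated at $c^i$ units of her top object, so the Hylland--Zeckhauser obstruction does not disappear. The paper handles (3) by embedding the economy in an $\alpha$-extended economy with an artificial common favorite good (so Theorem~\ref{thm:NJEeq} applies), obtaining PO, IR, NJE allocations $x^\alpha$, and then letting $\alpha\to 0$; linearity is used only at the end, to show that the limit remains Pareto optimal (the PO set is closed under linear utilities, which fails for general concave utilities and explains why (2) yields only wPO). Your limit argument for (2) is the same as the paper's, but it rests on a part-(1) foundation that, as written, does not stand.
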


Theorem~\ref{thm:existence} gives the existence of allocations with the desired properties of efficiency, fairness, and individual rationality. Our next result gives a competitive market foundation for these allocations.

\begin{theorem}\label{thm:NJEeq} 
	Suppose that agents' utility functions are quasi-concave, and that at least one the following conditions hold: 
	\begin{itemize}
	\item $\sum_{l\in O}q_l<\min_ic^i$ ($c^i$ is sufficiently large for every $i\in I$).
	\item $u^i$ satisfies the Inada property and $\tu^i>u^i(0)$, for every $i\in I$.
	\item There exists a common favorite object, and a strictly positive IR allocation.
	\end{itemize}
Then there exists continuous income functions $m^i:\Delta\rightarrow \Re_+$ and $(x,p) = ((x^i)_{i=1}^N,p)$, such that $p\in\Delta$ is a price vector, and
\begin{enumerate}
  \item $\sum_i x^i = Q$ ($x$ is an allocation; or, ``supply equals demand'').
  \item $x$ is individually rational, Pareto optimal and has no justified envy. 
  \item\label{it:umax}  $x^i\in \argmax\{u^i(z^i): z^i\in C^i \text{ and } p\cdot z^i\leq m^i(p) \}$.
  \end{enumerate}
\end{theorem}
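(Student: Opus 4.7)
My approach is to build price-dependent income functions via a careful clipping of a common baseline, and then find an equilibrium through a fixed-point argument. For $p \in \Delta$ let $\ul{m}^i(p) = \inf\{p\cdot z : z \in C^i,\, u^i(z) \geq \tu^i\}$ and $\bar{m}^i(p) = \inf\{p\cdot z : z \in C^i,\, u^i(z) = \max u^i(C^i)\}$. For a baseline scalar $\mu \ge 0$ I define
\[
m^i(p,\mu) \;=\; \min\bigl(\bar{m}^i(p),\,\max(\mu,\ul{m}^i(p))\bigr).
\]
The lower clip $\ul{m}^i$ secures individual rationality, and the upper clip $\bar{m}^i$ prevents a satiated agent from being assigned excess budget. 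Two built-in consequences drive the argument: if $m^i(p,\mu) < \mu$ then $i$ is satiated on her budget, and if $m^j(p,\mu) > \mu$ then $m^j = \ul{m}^j$ so that $j$'s maximum attainable utility on her budget is exactly $\tu^j$. This is the formal content of the informal description in Section~\ref{sec:CE}.

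For each $p$ I then solve $\sum_i m^i(p,\mu) = p\cdot Q$ in $\mu$; monotonicity of each $m^i$ in $\mu$ together with the existence of a (strictly positive) IR allocation provides a well-defined continuous budget-balancing $\mu(p)$. I would then apply Kakutani's theorem to the product of the demand correspondences $D^i(p) = \argmax\{u^i(z) : z \in C^i,\, p\cdot z \le m^i(p,\mu(p))\}$ and a Walrasian price-adjustment correspondence driven by aggregate excess demand. Quasi-concavity delivers convex-valued $D^i$, and the identity $\sum_i m^i(p,\mu(p)) = p\cdot Q$ plays the role of Walras' law, yielding a fixed point $(x,p)$ with supply equal to demand. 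The three alternative hypotheses of the theorem are invoked here to rescue continuity and upper hemicontinuity at boundary prices: sufficiently large $c^i$ removes the consumption-set boundary; the Inada property with $\tu^i>u^i(0)$ forces demand into the interior and keeps $\ul{m}^i$ well-behaved; a common favorite object with a strictly positive IR allocation handles prices at which some $p_l = 0$. This is precisely where the \cite{HZ1979} non-existence example lives, and I expect it to be the main obstacle of the proof, since each of the three alternative conditions has to be exploited in its own way to preserve the hypotheses of the fixed-point theorem.

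It remains to verify the three properties at the fixed point. Individual rationality is immediate from $m^i(p,\mu(p)) \ge \ul{m}^i(p)$. Pareto optimality follows from a standard first-welfare argument using quasi-concavity and monotonicity. For no justified envy, suppose $u^i(x^j) > u^i(x^i)$. Then $i$ is not satiated at $x^i$ (otherwise $u^i(x^i) = \max u^i(C^i) \ge u^i(x^j)$), so by the clipping $m^i(p,\mu(p)) \ge \mu(p)$. Revealed preference yields $p \cdot x^j > m^i(p,\mu(p))$; since $x^j$ lies in $j$'s budget, $m^j(p,\mu(p)) \ge p\cdot x^j > m^i(p,\mu(p)) \ge \mu(p)$, which by the clipping construction forces $m^j(p,\mu(p)) = \ul{m}^j(p)$. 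But then $p\cdot x^i \le m^i(p,\mu(p)) < \ul{m}^j(p)$, so by the very definition of $\ul{m}^j$ we conclude $u^j(x^i) < \tu^j$, establishing NJE.
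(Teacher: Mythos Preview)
Your income construction is exactly the paper's: the clipping $\min(\bar m^i,\max(\mu,\ul m^i))$ is precisely $\mbox{median}\{\ul m^i(p),\mu,\bar m^i(p)\}$ as in Section~\ref{sec:incomes}, and your verification of NJE is the content of Lemma~\ref{lem:prepenvy}. So the strategy is right; two points in the execution need repair.

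First, the budget-balancing step has a hole. The equation $\sum_i m^i(p,\mu)=p\cdot Q$ need not have a solution in $\mu$: the left side is bounded above by $\sum_i\bar m^i(p)$, and nothing rules out $\sum_i\bar m^i(p)<p\cdot Q$ (take $p$ concentrated on a good nobody needs for satiation). The paper treats this regime separately, setting $m^i(p)=\bar m^i(p)+\tfrac{1}{N}\bigl[p\cdot Q-\sum_j\bar m^j(p)\bigr]$, and then has to verify continuity of $m^i$ across the two regimes (Lemma~\ref{lem:micont}). Even when a balancing $\mu$ does exist it need not be unique, since each $m^i(\cdot,p)$ is only weakly increasing; one argues that $m^i(p)$ is well-defined even though $\mu(p)$ is not.

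Second, and more importantly, you misplace where the three alternative hypotheses do their work. They are \emph{not} used for continuity of incomes or upper hemicontinuity of the fixed-point correspondence. The paper does not apply Kakutani to the true demand $D^i$; it uses the quasi-demand $\dminexp^i(p)=\{z\in C^i:p\cdot z\le m^i(p)\text{ and }u^i(z)\ge u^i(\hat z)\text{ whenever }p\cdot\hat z<m^i(p)\}$, which is nonempty, convex-valued and upper hemicontinuous unconditionally. Kakutani therefore yields a quasi-equilibrium $(x^*,p^*)$ with no side conditions at all. The three hypotheses enter only afterwards, to upgrade quasi-equilibrium to equilibrium via the standard cheaper-point argument: each is used to show that every agent either has $m^i(p^*)>0$ or has $\{0\}$ as her sole affordable bundle. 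They do this in different ways (large $c^i$ forces $p^*\gg0$; Inada with $\tu^i>u^i(0)$ forces $\ul m^i(p^*)>0$; a common favorite object forces its price positive, after which the strictly positive IR allocation bounds incomes away from zero). Your plan of establishing uhc of $D^i$ case by case under each hypothesis is both harder and not what the argument actually requires.
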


Theorem~\ref{thm:NJEeq} provides conditions under which a fair, efficient, and IR allocation exists \emph{and} can be supported as a form of market equilibrium (a ``pseudo-market''). Our equilibria generalize Hylland and Zeckhauser's, or Varian's, notion of equilibrium with a fixed exogenous income. Here, income is not fixed. It is price dependent, and formulated through \emph{income functions} $m^i$. These are carefully calibrated to ensure both IR and NJE.

Theorem~\ref{thm:NJEeq} allows us to improve on Theorem~\ref{thm:existence}. When utilities are Lipschitz, quasi-concavity is sufficient.

\begin{corollary}\label{cor:lipschitz} If $u^i$ is quasi-concave and Lipschitz continuous for every $i\in I$, there is an allocation that is individually rational, weak Pareto optimal and has no strong justified envy.\end{corollary}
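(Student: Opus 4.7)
The plan is to derive Corollary~\ref{cor:lipschitz} from Theorem~\ref{thm:NJEeq} via an approximation argument. Since quasi-concavity alone does not place $\Gamma$ in any of the three sufficient regimes of Theorem~\ref{thm:NJEeq}, the idea is to build a sequence of perturbations $\Gamma_n$ that do fall in one of those regimes, apply the theorem to each, and pass to the limit using compactness of the allocation set together with Lipschitz continuity of $u^i$ to control the transition from the exact properties of $\Gamma_n$ to the weaker properties (IR, wPO, NSJE) demanded in $\Gamma$.

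For each $n$, I would target condition (iii) of Theorem~\ref{thm:NJEeq} by augmenting $\Gamma$ with a fictitious good $L+1$ of capacity $1/n$. The consumption space becomes $\Delta_-(c^i) \subset \R^{L+1}_+$, and the original utility $u^i$ is extended to a quasi-concave, monotone $\hat u^i_n$ in which $L+1$ is a common favorite object; the Lipschitz constant $\theta$ enters essentially here, since it controls how aggressively the new coordinate must be valued to dominate the original goods at the margin. To secure the remaining part of condition (iii) (existence of a strictly positive IR allocation in $\Gamma_n$), I would lower reservation utilities to $\tu^i - 1/n$. Theorem~\ref{thm:NJEeq} then furnishes $\hat x_n$ that is IR, PO, and NJE in $\Gamma_n$; its projection $x_n$ onto the first $L$ coordinates is automatically feasible in $\Gamma$, since $\sum_{l=1}^L x_n^{i,l} \le c^i - \hat x_n^{i,L+1} \le c^i$, and the supply constraints in the first $L$ coordinates coincide with those of $\Gamma$.

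By compactness of $\cala(\Gamma)$, a subsequence $x_{n_k}$ converges to some $\bar x \in \cala(\Gamma)$. Continuity of $u^i$ passes $u^i(x_{n_k}^i) \ge \tu^i - 1/n_k$ to $u^i(\bar x^i) \ge \tu^i$, giving IR. If $y \in \cala(\Gamma)$ strictly Pareto-dominates $\bar x$ in $\Gamma$, a small lift of $y$ to $\Gamma_{n_k}$ (distributing $1/n_k$ of good $L+1$ among agents, with a vanishing correction to preserve feasibility) is, by Lipschitz continuity and the compatibility between $\hat u^i_n$ and $u^i$, a strict Pareto improvement over $\hat x_{n_k}$ in $\Gamma_{n_k}$ for large $k$, contradicting PO; hence $\bar x$ is weakly Pareto optimal. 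Finally, strong justified envy of $j$ by $i$ at $\bar x$, i.e.\ the strict inequalities $u^i(\bar x^j) > u^i(\bar x^i)$ and $u^j(\bar x^i) > \tu^j$, persists at $\hat x_{n_k}$ for large $k$ (again by Lipschitz continuity and the $O(1/n_k)$ gap between $u^i$ and $\hat u^i_n$ on the lifted allocations), producing (weak) justified envy in $\Gamma_{n_k}$ and contradicting NJE.

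The main technical obstacle is the construction of the quasi-concave extension $\hat u^i_n$: the naive additive extension $u^i(y) + Mz$ generally fails to be quasi-concave when $u^i$ is merely quasi-concave, so one must design an extension (for example, a min-of-quasi-concaves construction, or another structure that exploits Lipschitz) that simultaneously preserves quasi-concavity, is monotone, and makes $L+1$ a common favorite. The Lipschitz hypothesis is used precisely here: the bound $\theta$ on the oscillation of $u^i$ lets one calibrate how much utility the fictitious good must supply to overcome any potential loss from reducing an existing good, without destroying convexity of the upper level sets.
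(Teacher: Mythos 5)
Your overall strategy is the same as the paper's: augment the economy with an artificial good of vanishing supply designed to be a common favorite, invoke condition (iii) of Theorem~\ref{thm:NJEeq} in each augmented economy, and pass to the limit, with the limit arguments for IR, wPO and no strong justified envy essentially as the paper runs them. The problem is that your proposal stops exactly where the proof has to be done: you never construct the extension $\hat u^i_n$, and without a specific continuous, monotone, quasi-concave extension in which the new good is a common favorite, Theorem~\ref{thm:NJEeq} cannot be applied and nothing downstream exists. The paper commits to a concrete choice: $U^i((x_l)_{l\in O},x_e)=u^i((x_l)_{l\in O})+\theta x_e$ with $\theta$ a common Lipschitz constant, original supplies rescaled to $(1-\alpha)q_l$ and the new good given supply $\alpha\sum_i c^i$ (so the standing assumption $\sum_{l}q_l\leq\sum_i c^i$ is preserved in the extended economy), and reservation utilities left unchanged. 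There $\theta$ does double duty: the strict Lipschitz bound makes $e$ a common favorite (trading $\epsilon$ of any good for $\epsilon$ of $e$ costs less than $\theta\epsilon$ in $u^i$ but gains exactly $\theta\epsilon$), and it shows that the rescaled IR allocation, topped up with $\alpha c^i$ of $e$, is strictly above $\tu^i$, which after mixing with the uniform allocation gives the strictly positive IR allocation required by condition (iii). That is the actual use of Lipschitz continuity, not the unspecified ``calibration'' in your sketch.

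Two further points. First, your device of adding supply $1/n$ of the new good without rescaling the original goods, while lowering reservations to $\tu^i-1/n$, is not innocuous: when $\sum_{l\in O}q_l=\sum_{i\in I}c^i$ (e.g.\ the unit-demand random allocation problem) the augmented economy violates the no-overall-excess-supply assumption, which is used in the proof of Theorem~\ref{thm:NJEeq} to clear markets when all agents are satiated; the paper's $(1-\alpha)$ rescaling is precisely what keeps the augmented economy inside the model. Second, your objection that the additive extension $u^i+\theta x_e$ need not be quasi-concave when $u^i$ is merely quasi-concave is correct (its upper contour sets are convex in general only when $u^i$ is concave), and it is in fact a criticism that bears on the paper's own construction as stated for this corollary; but flagging the obstacle is not overcoming it, and since your alternative (``min-of-quasi-concaves, or another structure'') is not worked out, your proposal as written does not amount to a proof of the corollary.
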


The connection between the two theorems is worth clarifying. We prove the third statement of Theorem~\ref{thm:existence} using Theorem~\ref{thm:NJEeq}. The first two statements of Theorem~\ref{thm:existence} have very different proofs, and can accommodate quantity constraints; See Section~\ref{sec:constraints}.

Finally, the two theorems hold without change if we use a stronger notion of NJE that does not rely on pairwise switch as the remedy for envy; See Section~\ref{sec:cyclicenvy}.

\subsection{Remarks on Theorem~\ref{thm:existence}}\label{sec:rmksThmExistence} Statements (1) and (2) of 
Theorem~\ref{thm:existence} are based on weighted utilitarian maximization. We study the problem of maximizing\begin{equation}
     \sum_{i\in I} \la^i u^i(x^i)\label{eq:weightedut}
\end{equation} over all IR allocations $x$, for each fixed vector $\la=(\la_1,\ldots,\la_N)$ of welfare weights. The trick is to find ``fair'' welfare weights. Ideally, one could proceed iteratively. For each $\la$, solve the weighted utilitarian maximization problem and check if there is any justified envy. If $i$ justifiedly envies $j$, then adjust $\la$ so as to decrease $\la^j$ and increase $\la^i$. Yet the iterative procedure does not quite work. We use a related idea, based on the  \textit{Knaster-Kuratowski-Mazurkiewicz (KKM) lemma}.

The KKM lemma was used by \cite{VARIAN197463} in proving the existence of Pareto-efficient allocations with no envy whatsoever. Varian does not consider participation constraints, and works directly with allocations (more precisely, with the utility possibility frontier). Our approach using welfare weights (inspired by the Negishi approach to equilibrium existence), is quite different. Participation constraints introduce some technical difficulties, which necessitates an approximation argument. The presence of $\ep>0$ in the IR, efficiency and NJE properties are consequences of our approximation argument. 

We briefly explain our use of the KKM lemma. Suppose there is a collection of closed sets, each one identified with a vertex of the simplex. Suppose that each face of the simplex is covered by the union of the sets identified with the vertices of such face. (Notice that the simplex is also a face of itself.) The KKM lemma says that such a collection of sets has non-empty intersection. In the proof of Theorem~\ref{thm:existence}, we identify the simplex with the set of welfare weights $\la$.  Thus each vertex of the simplex is the result of putting all weight on one agent in solving~\eqref{eq:weightedut}. Each set $ \bla^i $ corresponds to the set of weights yielding an $\ep$-Pareto optimal allocation in which 1) agent $i$ does not have $\ep$-justified envy towards any other agent and 2) $ \ep $-individually rationality holds for agent $i$. We show that the collection $ (\bla^i)_{i\in I} $ meets the conditions of the KKM lemma. Any point in the  intersection of $ (\bla^i)_{i\in I} $ meets the properties in the first statement of the theorem.

By taking the limit when $\ep\to 0$, we obtain the second statement of Theorem~\ref{thm:existence}. Observe that we only conclude that the obtained allocation is wPO. This is irrelevant in many allocation problems, in which wPO and PO are identical. There are environments, however, in which there is no such equivalence. Market design problems in which the consumption spaces are bounded (say, because of unit demand) constitute one example.

\subsection{Remarks on Theorem~\ref{thm:NJEeq}}\label{sec:CE}
The use of competitive markets to achieve a fair and efficient allocation is inspired by \cite{VARIAN197463} and \cite{HZ1979}, and more recently by \cite{miralles2014prices}, who establish a Second Welfare Theorem for the kind of allocation problems studied in our paper. 

Varian and Hylland-Zeckhauser use fixed and equal incomes for all agents. The first complication in our paper is that equal incomes will, however, not respect reservation utilities. Incomes must be price-dependent and constructed to satisfy IR and ensure NJE.\footnote{When reservation utility arises from agents' endowments, one may be tempted to use Walrasian incomes. These are, of course, price dependent, and ensure IR. Unfortunately, there are simple examples of allocation problems with endowments where no Walrasian equilibria exist \citep{HZ1979}. See \cite{echenique2019constrained} for a discussion.} Our model suggests a ``minimal departure'' from equal incomes that satisfies IR: We allow an agent to have above-average income only in order to obtain exactly her reservation utility. A second complication is that a competitive equilibrium allocation with potentially satiated agents does not guarantee Pareto-optimality, unless expenses for satiated agents are minimal \citep{HZ1979}. For this reason, we force an agent's income below average whenever the average lies above the minimal income providing her with satiation.

The main new idea in the proof of Theorem~\ref{thm:NJEeq} lies in the construction of price-dependent incomes. The construction is done by, for each price vector, taking the median of three magnitudes: 1) a common income level, 2) the minimum expenditure guaranteeing satiation, and 3) the minimum expenditure ensuring reservation utility. Agents' incomes add up to the overall value of objects. An important property of our construction is that, when $i$'s income is higher than $j$'s, and $j$ is not satiated, then $i$'s income must be equal to the minimum expenditure ensuring her reservation utility. This naturally establishes no justified envy: If agent $j$ envies $i$, then $j$ must have less equilibrium income than $i$, so that $i$ must find $j$'s allocation unacceptable. Finally, it is clear that IR is ensured, since no agent's income lies below the minimum expenditure ensuring her reservation utility.

 Once income functions are in place, existence follows standard ideas: first showing the existence of quasi-equilibrium (as in e.g \cite{mwg1995}, chapter 17, appendix B), and then exploiting the conditions stated in the theorem to bridge the gap between quasi-equilibrium and competitive equilibrium. To this end, the conditions stated in the theorem play a technical role. They serve to ensure that either all prices or incomes are strictly positive for all agents. The first condition consists of virtually unbounded (above) consumption spaces.\footnote{A corollary is that \textit{if the consumption space is $\Re^L_+$ for all agents, then our Walrasian approach obtains an allocation that is Pareto optimal, IR and has no justified envy.}} The second condition considers preferences for strictly positive bundles alongside with reservation utility above the minimal level. The third condition is based on the existence of a common favorite object, and a strictly positive IR allocation. The resulting allocation guarantees all the desired properties in their strongest sense.  

Similarly, Lipschitz continuity in Corollary~\ref{cor:lipschitz} is also a technical condition, ensuring that we can easily create an (arbitrarily) low amount of an artificial favorite good. Existence of a common favorite object is a restrictive assumption. The corollary works by  extending an economy with the addition of an artificial common favorite good. Lipschitz continuity is needed in order to facilitate such inclusion. We take the limit when the supply of the artificial good  tends to zero, obtaining the limit of a sequence of Pareto optimal allocations.

The addition of an artificial favorite good also lies behind the proof of Statement~(\ref{it:exists_linear}) in Theorem~\ref{thm:existence}. With linear (or expected) utility functions, the set of Pareto optimal allocations is closed in the amount of the artificial good. Hence, the limit allocation is also Pareto optimal.

\section{Efficient and fair assignment under constraints}\label{sec:constraints}

Many allocation problems require allocations to satisfy certain quantitative constraints. It is easy to adapt our model and results to such situations. In Section~\ref{sec:constraintthm}, we take  a set $\AC\subseteq \cala$  as the primitive and interpret it as the set of allocations that comply with some given collection of constraints.  $\AC$ is closed and convex, and we implicitly assume that the behind constraints satisfy the condition of \cite{budish2013designing}, which ensures that every feasible allocation can be achieved as randomized deterministic feasible allocations. We discuss several examples of explicit constraints in Section~\ref{sec:structures}.

\subsection{Constrained allocations}\label{sec:constraintthm}
Given a set of feasible allocations $\AC\subseteq \cala$, the definition of individual rationality is same as before. We assume that there exists an IR allocation in $\AC$. The definition of efficiency extends naturally to $\AC$. An allocation $ x\in \AC$ is \df{Pareto optimal} if there is no allocation $y\in  \AC$ such that $ u^i(y^i)\ge u^i(x^i) $ for all $ i\in I $ with strict inequality for some agent; $ x $ is \df{weak Pareto optimal} (wPO) if there is no allocation $y\in  \AC$ such that $ u^i(y^i)> u^i(x^i) $ for all $ i\in I $; and $ x $ is
\df{$\ep$-Pareto optimal} ($\ep$-PO), for any $\ep>0$, if there is no
allocation $y\in  \AC$ such that $ u^i(y^i)> u^i(x^i) +\ep$ for all $ i\in I $. 

Until now, $ i $'s envy towards  $ j $ is negated if switching their assignments violates the participation constraint of $ j $. Now, additional constraints provide another reason for negating $ i $'s envy: switching their assignments may not be feasible because it violates some constraints. To formalize this idea, let $ x_{i\leftrightarrow j} $ denote the allocation obtained by switching the assignments of $i$ and $j$ in an allocation $ x $; that is, $ x^i_{i\leftrightarrow j}=x^j $, $ x^j_{i\leftrightarrow j}=x^i $, and $ x^k_{i\leftrightarrow j}=x^k $ for all $ k\in I\setminus \{i,j\} $.
An agent $i$ has \df{justified envy} towards another agent $j$ at an allocation $x\in \AC$ if 
\[
u^i(x^j) > u^i(x^i), \ u^j(x^i) \ge \tu^j \text{ and } x_{i\leftrightarrow j}\in \AC.
\]

Under the new definition, no justified envy may no longer be compatible with efficiency and individual rationality. To overcome this difficulty, we classify agents into disjoint types. Informally, think of $ i $ and $ j $ as being of equal type if the constraints behind $ \AC $ do not distinguish between them. We identify agents' types by checking whether switching their assignments in any feasible allocation is still feasible. We then prove that fairness among agents of equal type is compatible with efficiency and individual rationality.

Formally, we say two agents $ i,j $ are of \textit{equal type}, denoted by $ i\sim j $, if for all $ x\in \AC $, $ x_{i\leftrightarrow j}\in \AC $.
The binary relation $\sim$ is reflexive and transitive.\footnote{
	Suppose $ i\sim j \sim k $. For all $ x\in \AC $, $ x_{i\leftrightarrow k} =[(x_{i\leftrightarrow j})_{j\leftrightarrow k}]_{i\leftrightarrow j} $. $ i\sim j  $ implies that $ x_{i\leftrightarrow j}\in \AC $, $ j\sim k $ implies that $ (x_{i\leftrightarrow j})_{j\leftrightarrow k}\in \AC$, and $ i\sim j $ implies that $ [(x_{i\leftrightarrow j})_{j\leftrightarrow k}]_{i\leftrightarrow j} \in \AC $. So $ x_{i\leftrightarrow k} \in \AC$.}
Hence it partitions $I$ into disjoint types. 
Then we say $ i $ has \df{equal-type justified envy} towards $j$ at an allocation $x\in  \AC$ if $ i $ has justified envy towards $ j $, and $ i,j$ are of equal type. We say that $ x $ has \df{no equal-type justified envy} if no agent has equal-type justified envy towards any other agent. 
\df{No strong equal-type justified envy} and
\df{no equal-type $ \epsilon $-justified envy} are defined in a similar way by stating that the relevant envy is absent in the allocation. 

With the above definitions, we extend the first two statements of Theorem \ref{thm:existence} to accommodate constraints.
\begin{theorem}\label{thm:existenceconstraints} Suppose that agents' utility functions are concave. 
	\begin{enumerate}
		\item  For any $ \ep>0 $, there exists an allocation
		that is $ \ep $-individually rational, $\ep$-Pareto optimal and has no equal-type $\ep$-justified envy; 
		\item There exists an allocation that is individually
		rational, weak Pareto optimal and has no strong equal-type
		justified envy.
\end{enumerate}
 \end{theorem}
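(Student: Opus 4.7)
The plan is to adapt the Knaster--Kuratowski--Mazurkiewicz (KKM) argument behind Theorem~\ref{thm:existence}(1)(2) to the constrained setting, with $\AC$ replacing $\cala$ throughout. The key new structural ingredient is that the equal-type restriction on envy is exactly what is needed to keep swap-based perturbations inside $\AC$: by definition of $i\sim j$, the transposition $x_{i\leftrightarrow j}$ preserves membership in $\AC$, and transitivity of $\sim$ extends this to cyclic permutations along chains of equal-type agents.

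Fix $\ep>0$. For each welfare weight $\la\in\Delta^N$, let $\P(\la)$ denote the set of maximizers of $\sum_i\la^i u^i(x^i)$ over $x\in\AC$ subject to the $\ep$-IR constraints $u^i(x^i)\ge\tu^i-\ep$ for all $i$. Compactness and convexity of $\AC$, concavity and continuity of the $u^i$, and the existence of an IR allocation in $\AC$ ensure that $\P(\la)$ is nonempty, compact, convex, and upper hemicontinuous in $\la$. Define
\[
\bla^i=\{\la\in\Delta^N:\exists\, x\in\P(\la)\text{ at which $i$ has no equal-type $\ep$-justified envy}\};
\]
each $\bla^i$ is closed by upper hemicontinuity of $\P$ and continuity of the utilities. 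Note that $\ep$-IR automatically holds at every point of $\P(\la)$.

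The core obstacle is the KKM covering condition: for every nonempty $S\subseteq I$ and every $\la\in\con\{e_i:i\in S\}$, some $i\in S$ must lie in $\bla^i$. Suppose not and pick any $x\in\P(\la)$; then every $i\in S$ has equal-type $\ep$-justified envy toward some $j_i$ with $i\sim j_i$. If some such $i$ has $j_i\notin S$, then $\la^{j_i}=0$ and the swap $y=x_{i\leftrightarrow j_i}\in\AC$ remains $\ep$-IR and raises the weighted objective by $\la^i[u^i(x^{j_i})-u^i(x^i)]>0$, contradicting $x\in\P(\la)$. Otherwise $i\mapsto j_i$ sends $S$ into itself, so following it from any starting agent produces a cycle $i_1\to i_2\to\cdots\to i_k\to i_1$ inside a single $\sim$-class; the cyclic relabelling $y^{i_\ell}=x^{i_{\ell+1}}$ (indices modulo $k$, unchanged elsewhere) lies in $\AC$ by transitivity of $\sim$, is $\ep$-IR because each envy gives $u^{i_\ell}(x^{i_{\ell+1}})>u^{i_\ell}(x^{i_\ell})\ge\tu^{i_\ell}-\ep$, and raises the weighted objective by $\sum_\ell\la^{i_\ell}[u^{i_\ell}(x^{i_{\ell+1}})-u^{i_\ell}(x^{i_\ell})]>0$ since all cycle weights are strictly positive. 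Either case contradicts optimality, so the covering holds.

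The KKM lemma then produces $\la^*\in\bigcap_i\bla^i$, and any $x^*\in\P(\la^*)$ is $\ep$-IR, has no equal-type $\ep$-justified envy, and is $\ep$-PO: any $y\in\AC$ with $u^i(y^i)>u^i((x^*)^i)+\ep$ for all $i$ would itself be IR (hence feasible for $\P(\la^*)$) yet would raise the weighted objective by more than $\ep$, contradicting $x^*\in\P(\la^*)$. This proves statement (1). For statement (2), take $\ep_n\downarrow 0$ with corresponding allocations $x_n$ from statement (1), extract a limit $x^*\in\AC$ by compactness of $\AC$, and use continuity of the $u^i$ together with the strict inequality built into the definition of strong equal-type justified envy to conclude that $x^*$ is IR, wPO, and has no strong equal-type justified envy.
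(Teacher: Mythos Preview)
Your KKM covering and closedness arguments are fine, but there is a genuine gap at the very end: from $\la^*\in\bigcap_i\bla^i$ you conclude that ``any $x^*\in\P(\la^*)$ \ldots has no equal-type $\ep$-justified envy.'' This does not follow. With your definition $\bla^i=\{\la:\exists\,x\in\P(\la)\text{ at which }i\text{ has no equal-type }\ep\text{-justified envy}\}$, the conclusion $\la^*\in\bigcap_i\bla^i$ only guarantees that for each $i$ there is \emph{some} $x_i\in\P(\la^*)$ that is envy-free for that particular $i$; nothing forces these $x_i$ to coincide. Since $\P(\la^*)$ is typically not a singleton (for instance whenever some coordinate of $\la^*$ vanishes, or when distinct agents share a utility function), you cannot pick a single $x^*$ that works for all $i$ simultaneously. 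Nor can you argue directly that an arbitrary $x^*\in\P(\la^*)$ is envy-free: if $i$ envies $j$ at $x^*$ with $\la^{*j}>0$, the pairwise swap need not raise the weighted objective, and you have no cycle to fall back on because you cannot assert that $j$ also envies someone at $x^*$.

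The paper closes this gap by perturbing the objective with a small strictly concave term, optimizing $\sum_i\la^i u^i(x^i)-\delta\sum_i\|x^i-\mathbf{1}\|$ over the $\ep$-IR feasible set with $\delta>0$ chosen so that $\delta\max_x\sum_i\|x^i-\mathbf{1}\|<\ep$. Strict concavity makes the argmax a singleton, so $\phi(\la)$ is a continuous \emph{function}; then $\bla^i$ is defined without an existential quantifier and $\la^*\in\bigcap_i\bla^i$ immediately yields a single allocation $\phi(\la^*)$ that is envy-free for every $i$. The smallness of $\delta$ preserves $\ep$-Pareto optimality. Your $\ep$-IR and $\ep$-PO arguments, and the limit step for part (2), are otherwise correct and match the paper's.
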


We say that the implicit constraints behind $ \AC $ are \textit{anonymous} if all agents are identified to be of equal type. The model in Section~\ref{sec:model} is one where constraints are anonymous since $ \AC=\A $.

\subsection{Constraint structures.}\label{sec:structures}
It is often most useful to explicitly model the source of types and constraints. For example, types can arise from definitions of socio-economic status, or racial and ethnic classifications. Following the approach of \cite{budish2013designing}, we define a general constraint structure, and then discuss some examples in which an exogenous collection of types gives rise to constrained allocations.

A constraint $ [H,(\ul q_H,\bar q_H)] $ consists of a set  $H\subseteq I\times O$ and a pair of integers $(\ul q_H,\bar q^H)$  with $\ul q_H\leq \bar q_H$. Given a collection $\H$ of constraints, we define 
\[ 
 \AH= \big\{ x\in \A :  \ul q_H\leq  \sum_{(i,l)\in H} x^{i}_{l} \leq \bar q_H \text{ for all } [H,(\ul q_H,\bar q_H)]\in \H \big\}
 \] 
as the set of feasible allocations satisfying $ \H $.

The first example is controlled school choice in which the set of students $ I $ are partitioned into disjoint subsets $T_1,\ldots,T_K$ (which are interpreted as types), and for each school $l$, the desirable number of type $k$ students where $k\in \{1,\ldots,K\}$ is between $ \ul q_{l,k} $ and $ \bar q_{l,k} $. So for each school $l$ and each type $k\in \{1,\ldots,K\}$, we have a constraint  
\[ \big[T_k\times \{l\}, (\ul q_{l,k}, \bar q_{l,k})\big].\]
Theorem~\ref{thm:existenceconstraints} says that there is an efficient and individually rational allocation that achieves fairness within each type.

The second example is the collection of distributional constraints studied by \cite{kamada2015efficient,kamada2017stability}. In that collection, every constraint is of the form
\[
\big[I\times O',(0,\bar q_{O'})\big]
\] where agents are doctors and $ O'\subseteq O $ is the set of hospitals in a geographic region (a city or a prefecture). The collection of constraints is anonymous because each constraint does not distinguish between the identities of doctors. In general, a collection $ \H $ is \textit{anonymous} if for every constraint $ [H,(\ul q_H,\bar q_H)]\in \H $, $ H $ is of the form $ I\times O' $ for some $ O'\subseteq O $. Theorem~\ref{thm:existenceconstraints} implies that for anonymous constraints, there is an efficient and individually rational allocation that achieves fairness among {\em any} two agents.

In the last example, $ \H $ consists of \df{individual constraints}, which impose restrictions on each individual's assignment. In our model of Section~\ref{sec:model}, we have already encountered the $N$ individual constraints:  $[\{i\}\times O,(0,c^i)]$ for each $i\in I$. Formally, $  \H $ consists of $\mathcal{H}_i $ where each $ \mathcal{H}_i$ further consists of constraints of the form 
\[
[\{i\}\times O',(\ul q_{i,O'},\bar q_{i,O'})]
\]
with $O'\subseteq O$. For example, in course allocation if a student $i$ has to take at least one math course but no more than three math courses, we can impose the constraint $[\{i\}\times O',(1,3)]$ where $O'$ is the set of math courses. Theorem~\ref{thm:existenceconstraints} says that we can achieve fairness among agents of equal individual constraints.

\section{Discussion}\label{sec:discussion}

\subsection{Justified envy by exchange}\label{sec:cyclicenvy}
Our notion of NJE relies on pairwise switch as being the remedy for envy. We think of such switch as natural. But if pairwise switch is seen as limited, it is important to note that our result is, in fact, easily generalized to allow for more general remedies.

Let us think of envy that can be addressed by carrying out a chain of exchanges, each agent giving up her assignment in favor of an agent who envies her, and the last agent in the exchange being given the assignment of the first agent. If this reallocation does not violate the last agent's participation constraint, then the envy is justified.

Formally, agent $i$ has \df{justified envy by exchange} towards agent $j$ at allocation $x$ if  there exists a sequence of distinct agents $(i_k)_{k=1}^K$ with \begin{itemize}
	\item $i_1=i$ and $i_2=j$;
	\item $i_k$ envies $i_{k+1}$, $1\leq k\leq K-1$;
	\item and $u^{i_K}(x^{i_1})\ge \tu^{i_K}$.
\end{itemize}
The idea is that $i$ could conjure a remedy for her envy towards $j$ by proposing a coalition of agents and a reallocation of their assignments, such that all are made better off, with the possible exception of one agent whose participation constraint is not violated. We  define \df{strong justified envy by exchange} and \df{$\ep$-justified envy by exchange} similarly as before. We prove that our main theorems hold without change under this extended fairness notion.

\begin{theorem}\label{thm:cyclicenvy}
	Theorem \ref{thm:existence} and Theorem \ref{thm:NJEeq} hold without change if no ($\ep$-/strong)  justified envy is replaced by no ($\ep$-/strong)  justified envy by exchange.
\end{theorem}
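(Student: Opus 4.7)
The plan is to show that the allocations already constructed in Theorems~\ref{thm:existence} and~\ref{thm:NJEeq} automatically rule out the stronger notion of justified envy by exchange, so that the original proofs lift by substituting the pairwise condition with its chain analogue. I would handle the two theorems separately.

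For Theorem~\ref{thm:NJEeq}, I would take the equilibrium $(x,p)$ and income functions $m^i(\cdot)$ produced by the original construction, and argue by contradiction. If a chain $(i_k)_{k=1}^K$ witnesses justified envy by exchange, then since $i_k$ strictly prefers $x^{i_{k+1}}$ to the budget-optimal $x^{i_k}$, one has $p\cdot x^{i_{k+1}} > m^{i_k}(p)$; combined with $i_{k+1}$'s budget $p\cdot x^{i_{k+1}} \le m^{i_{k+1}}(p)$, this telescopes to $m^{i_1}(p) < m^{i_2}(p) < \cdots < m^{i_K}(p)$. I then invoke the structural property of the income construction from Section~\ref{sec:CE}---when $i$'s income strictly exceeds $j$'s and $j$ is unsatiated, $m^i(p)$ equals $i$'s minimum expenditure for $\tilde u^i$---on the pair $(i_K, i_1)$, observing that $i_1$ is unsatiated since she strictly envies $i_2$. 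This forces $m^{i_K}(p) = \min\{p\cdot y : y\in C^{i_K},\, u^{i_K}(y)\ge \tilde u^{i_K}\}$. Since $p\cdot x^{i_1} \le m^{i_1}(p) < m^{i_K}(p)$, I conclude $u^{i_K}(x^{i_1}) < \tilde u^{i_K}$ (either by minimality or because $x^{i_1}\notin C^{i_K}$, where $u^{i_K}\equiv -\infty$), contradicting the chain's closing condition.

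For statements (1)--(2) of Theorem~\ref{thm:existence}, I would redefine the KKM sets as
\[
\Lambda^i := \bigl\{\lambda \in \Delta : x(\lambda)\text{ is $\ep$-PO, $\ep$-IR for $i$, and $i$ initiates no $\ep$-justified envy by exchange at } x(\lambda)\bigr\},
\]
with $x(\lambda)$ the $\ep$-IR-constrained utilitarian optimum at welfare weights $\lambda$. Closedness is preserved because simple chains have length at most $N$. To verify the KKM covering condition---on each face with vertex set $S$, $\{\lambda:\supp(\lambda)\subseteq S\}\subseteq \bigcup_{i\in S}\Lambda^i$---I argue by contradiction: if every $i\in S$ initiates a chain in the envy graph, then either (a) some chain terminates at a vertex outside $S$, necessarily of $\lambda$-weight zero, or (b) every chain stays within $S$, and the map sending each $S$-vertex to the endpoint of its chain induces a cycle in $S$; concatenating the chains along this cycle gives a closed walk in the envy graph from which I extract a simple cycle. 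In case (a), the cyclic rearrangement along the chain is $\ep$-IR-feasible (via the chain's closing slack at $i_K$) and strictly increases $\sum_j\lambda^j u^j(x^j)$ thanks to the positive contributions of the $S$-agents on the chain. In case (b), every edge of the extracted cycle is an envy, so the cyclic swap preserves exact IR and strictly raises the objective. Both outcomes contradict optimality of $x(\lambda)$. Nonemptiness of the KKM intersection then yields the $\ep$-version, and letting $\ep\to 0$ delivers statement~(2). Statement~(3) of Theorem~\ref{thm:existence} and Corollary~\ref{cor:lipschitz} are inherited from the already-extended Theorem~\ref{thm:NJEeq} via the artificial common-favorite-good bootstrap, which requires no modification.

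The hard part is the cycle-extraction step in the KKM argument: chains initiated at $S$-vertices can pass through agents with zero $\lambda$-weight, so care is needed to produce a cyclic rearrangement with strictly positive objective change. The key observations are that zero-weighted agents are neutral in the objective but still fit into $\ep$-IR feasibility via the chain's closing slack, which resolves case (a), and that case (b) reduces to standard cycle-finding in a finite digraph where every relevant vertex has positive out-degree.
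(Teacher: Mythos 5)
Your proposal follows essentially the same route as the paper: for Theorem~\ref{thm:NJEeq} it telescopes incomes along the envy chain and uses Lemma~\ref{lem:prepenvy} to force $m^{i_K}(p)=e^{i_K}(\tu^{i_K},p)$, contradicting the chain's closing condition, and for Theorem~\ref{thm:existence} it redefines the KKM sets via envy by exchange, proving the covering condition by either rearranging along a chain that ends at a zero-weight agent or extracting a simple envy cycle from the concatenated chains and permuting along it. The only difference is cosmetic: the paper states the covering condition over $\supp(\la)$ rather than the face's vertex set $S$, which sidesteps your zero-weight-initiator worry, and its cycle-shortening keeps the positive-weight agent $h_1$ on the extracted cycle, which is what justifies the strict objective improvement you assert (alternatively, uniqueness of the strictly concave maximizer closes that case).
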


\subsection{An example of envy between agents of equal endowment in an allocation of no justified envy} It is natural to think of endowments as a source for reservation utility. In this subsection, we present an example of a discrete allocation problem in which one agent envies another agent in an allocation that is individually rational, Pareto optimal, and satisfies NJE,  even though the two agents have equal endowments.  The punchline is that the two agents have different preferences, so that they play very different roles in the economy. Other agents ``trade'' with them, and as a result one of them ends up being more useful than the other to the remaining agents. The outcome implies the presence of envy.

The example also suggests that our solution may fail to be incentive compatible. We have not specified a selection mechanism, and opted not to discuss incentives and strategy-proofness, but the example conveys some insights. One agent envies another even though they have equal endowments. This fact suggests that one agent may want to pretend to be the agent that he envies. In a large economy in which the number of agents who report each type of preference does not change very much after a misreport, it stands to reason that such a misreport would not be profitable. Of course, the example falls short of proving that if we were to define a fair mechanism it would not be strategy proof.

\begin{example}\label{example:equalenvy} Consider five agents $ \{1,2,3,4,5\} $  and three objects $ \{a,b,c\} $. There are two copies of objects
	$b$ and $c$, but only one copy of object $a$. Agents have linear utility functions. Their von-Neumann-Morgenstern (vNM) utilities and endowments are described in Table \ref{example:equalenvy}. Observe that agents 1 and 2 have identical endowments. 
	\begin{table}[!h]
		\centering
		\begin{subtable}{.3\linewidth}
			\centering
			\begin{tabular}{c|ccc}
				$i$	& $ u^i_{a} $ & $ u^i_{b} $ & $ u^i_{c} $ \\ \hline
				$ 1 $   & 3              & 1                & 2 \\
				$ 2 $   & 3              & 2                & 1 \\
				$ 3 $   & 2 & 3 & 1 \\
				$ 4 $   & 2 & 3 & 1 \\
				$ 5 $   & 2 & 3 & 1 \\
			\end{tabular}
			\subcaption{Utilities}
		\end{subtable}
		\quad
		\begin{subtable}{.3\linewidth}
			\centering
			\begin{tabular}{c|ccc}
				$i$	& $ \w^i_{a} $ & $ \w^i_{b} $ & $ \w^i_{c} $ \\ \hline
				$ 1 $   & 0 & 1 & 0 \\
				$ 2 $   & 0 & 1 & 0 \\
				$ 3 $   & 1/3 & 0 & 2/3 \\
				$ 4 $   & 1/3 & 0 & 2/3 \\
				$ 5 $   & 1/3 & 0 & 2/3 \\
			\end{tabular}
			\subcaption{Endowments}
		\end{subtable}
		\quad
		\begin{subtable}{.3\linewidth}
			\centering
			\begin{tabular}{c|ccc}
				$i$	  & $ x^i_{a} $ & $ x^i_{b} $ & $ x^i_{c} $ \\ \hline
				$ 1 $   & 0                & 0                 & 1 \\
				$ 2 $   & 1/2             & 0                & 1/2 \\
				$ 3 $   & 1/6 & 2/3 & 1/6 \\
				$ 4 $   & 1/6 & 2/3 & 1/6 \\
				$ 5 $   & 1/6 & 2/3 & 1/6 \\
			\end{tabular}
			\subcaption{Allocation $ x $}
		\end{subtable}
		\caption{Example \ref{example:equalenvy}}\label{table:example:equalenvy}
	\end{table}
	
	Consider the allocation $x$ in the same table. Agent 1 envies agent 2 at $ x $ because \[
	u^1\cdot x^1 = 2< 3/2+2/2=u^1\cdot x^2.\] The envy is not justified,
	however, because \[
	u^2\cdot x^1 = 1< 2 = u^2\cdot \w^2.\] In fact, it is easy to see that
	$x$ has no justified envy, and is individually rational and
	Pareto optimal. In any PO allocation $y$, we cannot have $y^1_{b}>0$, as agent $1$ and any agent $j\in\{3,4,5\}$ are willing to trade $b$ for any other object. So $y^1$ must be a convex combination of $(1,0,0)$ and
	$(0,0,1)$. To make agent 1 better off then we would need to give agent
	1 some shares of $ a $, but these can only come at the expense of
	agent 2. To make agent 2 better off, she would need to get more shares
	of $ a $, but these can only come at the expense of agents 3, 4 and
	5. These agents could only exchange shares of $ a $ for shares of $ b $, which agent 2 does not have. All agents rank
	objects $ a $ and $ c $ in the same way.  
\end{example}

\section{Application to school choice}\label{sec:schoolchoice}
School choice is the problem of allocating children to schools when we want to take into account children's (or their parents') preferences \citep{abdulkasonmez}. In the last 15 years, several large US school districts have  implemented school choice programs that follow economists' recommendation and are based on economic theory.\footnote{Boston  \citep{abdulkasonmez,abdulkadirouglu2005boston}, New York \citep{NYCmatch}, and Chicago \citep{pathak2013school} are leading examples.} Practical implementation of school choice programs presents us with a number of lessons and challenges. 

The first lesson is that school choice should be guided by fairness, or lack of justified envy. When given the choice of implementing either a fair or an efficient outcome, school districts have consistently chosen fairness \citep{abdulkadirouglu2005boston,NYCmatch}. One reason could be that district  administrators are concerned with litigation: If Alice envies Bob's school,  then the district can invoke justified envy to argue as a defense that Bob had a higher priority than Alice at the school.\footnote{Observe that this notion of justified envy is pairwise, as is ours.} It is also likely that district administrators, and society as a whole, have an intrinsic preference for fairness, and the preference is strong enough to outweigh concerns over efficiency.

The second lesson is that school districts want to give some children certain rights, like the right to attend a neighborhood school, or the right to go to the same school as an older sibling. In the current practice, such rights are achieved by giving children different priorities. Priorities seem simple, but they are not transparent: Priorities do {\em not} translate immediately into allocation outcomes. Alice may have a good priority in one school, but her chance of getting into the school depends on  all students' choices and priorities, not only on her priority at the school. This is especially true when priorities are coarse, which is common in practice.\footnote{In school choice with coarse priorities, we can prove that it is computationally hard to determine if a given student will be assigned a given school.}
Also, the ordering of students in a priority may not reflect their ``rights'' at the school. Below we construct an example in which if two students switch priorities at one school, the student who climbs up in the priority ends with a worse outcome in the student-optimal stable matching.

\begin{example}\label{example:priorityharm}
 Consider three schools $\{a,b,c\}$ and three children $\{1,2,3\}$. Suppose that school priorities, and children preferences are as follows.

\begin{center}
	\begin{tabular}{ccc}
$a$  & $b$ & $c$ \\ \hline
$2$ & $2$   & $3$ \\
$3$ & $3$   & $1$ \\
$1$ & $1$   & $2$ \\
	\end{tabular} \;\;\;\;\;\;
	\begin{tabular}{ccc}
$1$  & $2$ & $3$ \\ \hline
$b$ & $a$   & $a$ \\
$c$ & $b$   & $c$ \\
$a$ & $c$   & $b$ \\
	\end{tabular} \end{center}

Then the student-optimal stable matching is 
$\mu(1)=b$,  $\mu(2)=a$ and  $\mu(3)=c$.
If $1$ and $2$ switch roles in the priority ranking of school $a$, then the  student-optimal stable matching becomes 
$\mu(1)=c$,  $\mu(2)=b$ and  $\mu(3)=a$. So $ 1 $ becomes worse off. The trick here is that when $1$ and $2$ switch roles, they also change their positions relative to $3$. The message of the example is that justified envy, a ``pairwise'' concept in school choice, cannot rely on the relative position in schools' priorities of the pair in question.
\end{example}

The third lesson is that school districts have demonstrated a strong preference for controlling the racial and socio-economic composition of their schools: so-called {\em controlled school choice}. A common critique of existing school choice programs is that they have led to undesirable school compositions. For example, in Boston, schools have been left with too few neighborhood children,
which has motivated a move away from the system recommended by economists (\citealp{durboston}). In New York City, the new school choice system exhibits high degrees of racial segregation. Segregation in NYC schools is  not new, but the complaint is that the new school choice program may have made it worse, and certainly has not helped. In the words of a recent New York Times article ``\ldots school choice has not delivered on a central promise: to give every student a real chance to attend a good school. Fourteen years into the system, black and Hispanic students are just as isolated in segregated high schools as they are in elementary schools --- a situation that school choice was supposed to ease.''\footnote{``The Broken Promises of Choice in New York City  Schools'',	{\em New York Times}, May 5th, 2017.} The article points to a dissatisfaction with school composition, and access to the best schools.

The situation in NYC has reached a point where there are talks of doing away with school priorities, and instead instituting a lottery. In fact, Professor Eric Nadelstern at Columbia University, who served as deputy school chancellor when the new school choice system was implemented, has recently proposed that children be allowed to apply to any school, and have a lottery deciding the allocations.\footnote{``Confronting Segregation in New York City  Schools'', {\em New York Times}, May 15th, 2017.}

Given the absence of a direct connection between priorities and outcomes, and the situation in NYC, we propose the use of endowments to control children's rights. This makes Nadelstern's proposal compatible with school choice. We imagine that there is a lottery that gives an {\em initial} probabilistic allocation of children to schools. The lottery could be as simple as giving each child the same chance of attending any school. It could also reflect different objectives in controlled school choice, such as giving each child a higher chance of attending his or her neighborhood school, or giving each minority child a chance (literally, a positive probability) of attending the highest-ranked schools. The initial allocation, or endowment, provides transparent and immediate reservation utility. A child who is endowed with a seat at her neighborhood school can simply choose to attend that school. His or her right to attend that school does not depend on other children in any way.

The initial allocation is typically not the final allocation, because we want preferences to play a role. We assume that children use expected utilities to compare lotteries and ask them to report vNM utilities of schools. For convenience, we may use normalization by  requiring each child to assign utility 1 to his favorite school and assign utility 0 to the worst school. To capture the desired bounds on the composition of a school, we can use quantity constraints. Subject to constraints, our solution achieves all the desirable properties. The final allocation will be fair, efficient, and individually rational.

Beyond school choice, our model and results can apply to other market design problems. An example is time bank where agents exchange labor without using transfer (see \cite{andersson2018organizing}). In that problem every agent demands others' services and also provides services to others. The services an agent can supply are her endowments, and define her reservation utility.

\section{Proof of Theorems~\ref{thm:existence} and \ref{thm:existenceconstraints}}\label{sec:proofexistence}

We prove Theorem~\ref{thm:existenceconstraints} in this section. The first two statements of Theorem~\ref{thm:existence} are corollaries. We prove the third statement of Theorem~\ref{thm:existence} in Section \ref{sec:prooflipschitz} after proving Theorem~\ref{thm:NJEeq} in the next section.


For any given $ \ep>0 $, define
\[\A^* =\{ x\in  \AC: x \text{ is $ \ep $-individually
	rational}\}.\]
It is easy to see that $\A^*$ is nonempty and compact. 

Let the $N$-dimensional simplex $\Delta$ be the domain of welfare weights. 

For any $ \la\in \Delta$, define 
\[ 
\phi(\la)= \argmax\{
\sum_{i\in I} \la^i u^i(x^i) - \delta \sum_{i\in I} \norm{x^i -
	\one}: (x^i)_{i\in I}\in \A^* \}, \] 
where $\one$ is a vector of ones and $ \delta>0$ is small enough such that
\[
\delta\max_{x\in \A^*}\sum_{i\in I} \norm{x^i - \one}<\ep.
\]

Since all $ u^i $ are continuous and concave and $ \sum_{i\in I}
\norm{x^i - \one} $ is continuous and strictly convex, the
objective function $ \sum_{i\in I} \la^i u^i(x^i) - \delta
\sum_{i\in I} \norm{x^i - \one} $ is continuous and strictly
concave. Moreover, $\A^*$ is compact.  Thus, $\phi:\Delta\rightarrow \A^*$ is a function
(meaning it is singleton-valued), and, by the Maximum Theorem, continuous. Moreover, the choice of $\delta$ implies that $\phi$ is $\ep$-Pareto optimal.

For any agent $ i $, define
\[
\bla^{i} = \{\la\in\Delta : \nexists j\in I \text{ s.t  }i \text{ has equal-type }\ep \text{-justified envy towards }j \text{ at } \phi(\la) \}.
\]

The proof relies on an application of the so-called KKM Lemma (the
lemma is due to Knaster, Kuratowski and Mazurkiewicz; see Theorem 5.1 in 
\cite{border1989fixed}). In the following two lemmas we prove that $
\{\bla^{i} \}_{i=1}^N $ is a \df{KKM covering} of the simplex $ \Delta
$. This means that every $\bla^i$ is closed and that for any $\la\in \Delta$ there is at least one $\bla^i$ such that $\la^i>0$ and $ \la\in \bla^i $.

\begin{lemma}\label{lem:Ciclosed}
	For every $ i\in I $, $\bla^{i}$ is closed.
\end{lemma}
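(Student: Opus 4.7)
The plan is to establish closedness of $\bla^i$ by showing that its complement in $\Delta$ is open. This is the natural direction because equal-type $\ep$-justified envy is defined by \emph{strict} inequalities, which are open conditions once composed with continuous functions. Concretely, $\la \notin \bla^i$ iff there is some $j$ of the same type as $i$ with $u^i(\phi(\la)^j) - u^i(\phi(\la)^i) > 0$ and $u^j(\phi(\la)^i) - \tu^j + \ep > 0$ at the unique maximizer $\phi(\la)$.

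First I would write the complement as the finite union
$$(\bla^i)^c = \bigcup_{j \sim i,\, j \ne i} U^{ij},$$
where $U^{ij} = \{\la \in \Delta : u^i(\phi(\la)^j) > u^i(\phi(\la)^i) \text{ and } u^j(\phi(\la)^i) > \tu^j - \ep\}$. Crucially, the equivalence relation $\sim$ does not depend on $\la$, so the index set of this union is a fixed finite subset of $I$. Next I would invoke the two ingredients already in hand: $\phi : \Delta \to \A^*$ is continuous (as noted via the Maximum Theorem after $\phi$ is known to be single-valued), and each $u^i$ is continuous on its domain $C^i$. Composing, the maps $\la \mapsto u^i(\phi(\la)^j) - u^i(\phi(\la)^i)$ and $\la \mapsto u^j(\phi(\la)^i) - \tu^j + \ep$ are continuous from $\Delta$ to $\Re$, so $U^{ij}$ is the intersection of the preimages of $(0,\infty)$ under two continuous maps and is therefore open. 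A finite union of open sets is open, so $(\bla^i)^c$ is open and $\bla^i$ is closed.

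I do not expect a substantive obstacle. The only delicate point is that $\phi(\la)^i$ must lie in $C^i$ for the continuity of $u^i$ to apply, but this is automatic since $\phi(\la) \in \A^* \subseteq \cala$ and every allocation satisfies $x^i \in C^i$ by the definition of $\cala$. The main conceptual reason for working with the complement rather than directly with $\bla^i$ is precisely to exploit the strict inequalities; attempting to close $\bla^i$ under limits would reduce these to their non-strict counterparts, a weaker and potentially inequivalent condition. Arguing openness of the complement avoids this issue cleanly.
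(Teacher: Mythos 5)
Your argument is, in substance, the paper's own: the paper proves closedness sequentially (take $\la_n\to\la$ in $\bla^i$, use continuity of $\phi$ and of the utilities to push the strict envy inequalities to large $n$, and contradict $\la_n\in\bla^i$), which is exactly the contrapositive of your openness-of-the-complement argument, and both rest on the same two ingredients (continuity of $\phi$ via the Maximum Theorem, and strict inequalities composed with continuous maps, with the equal-type relation fixed independently of $\la$).

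One step in your write-up does need justification, and it is not the one you flag. You assert that $\la\mapsto u^i(\phi(\la)^j)$ and $\la\mapsto u^j(\phi(\la)^i)$ are continuous maps into $\Re$, but $u^i$ takes the value $-\infty$ off its domain $C^i$ and is only continuous in the relative topology on $C^i$; since consumption caps may differ across agents, $\phi(\la)^j\in C^j$ does not by itself place $\phi(\la)^j$ in $C^i$, so the cross-evaluations are the genuinely delicate point, not $\phi(\la)^i\in C^i$. This is precisely where the restriction of your union to $j\sim i$ does real work: for every $\la$, $\phi(\la)\in\AC$ and $j\sim i$ imply $\phi(\la)_{i\leftrightarrow j}\in\AC\subseteq\cala$, hence $\phi(\la)^j\in C^i$ and $\phi(\la)^i\in C^j$, so both compositions are real-valued and continuous. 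The paper makes this observation explicitly (noting $x_{i\leftrightarrow j}\in\AC$ and $(x_n)_{i\leftrightarrow j}\in\AC$) before invoking continuity. With that one line added, your proof is complete and coincides with the paper's.
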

\begin{proof}
	Let $\la_n$ be a sequence in $\bla^{i}$ such that $\la_n\rightarrow
	\la\in\Delta$. Let $x_n= \phi(\la_n)$. By continuity of $ \phi $,
	$x_n\rightarrow  x=\phi(\la)\in\A^*$. Now we prove that $\la\in 
	\bla^{i}$, that is, $ i $ does not have equal-type  $\ep$-justified envy
	towards any other agent. Suppose that there is an agent $j$ of equal type with $i$ such that $u^i(x^j)>  u^i(x^i)$ and $u^j(x^i)> \tu^j-\ep$. Since $i$ and $j$ are of equal type, $ x_{i\leftrightarrow j}\in\AC$, and  $(x_{n})_{i\leftrightarrow j}\in \AC$ for every $ n $. By continuity of $u^i$ and $ u^j $, for $n$ large enough we have
	$u^i(x^j_n)>  u^i(x^i_n)$ and $u^j(x^i_n)>
        \tu^j-\ep$. These mean that $ i $ has equal-type $\ep$-justified envy towards $ j $ at 
	$x_n$, which is a contradiction. Therefore, $ \la\in \bla^{i}  $ and $\bla^i$ is closed.
\end{proof}

\begin{lemma}\label{lem:KKMconditions}
	For every $ \la\in \Delta $, $ \la \in \cup_{i\in \supp(\la)} \bla^i $.
\end{lemma}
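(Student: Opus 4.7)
The plan is to prove Lemma~\ref{lem:KKMconditions} by contradiction. Fix $\la \in \Delta$, let $x = \phi(\la)$, and suppose that for every $i \in \supp(\la)$ there is some $j(i) \in I$ toward whom $i$ has equal-type $\ep$-justified envy at $x$. I will exhibit an allocation in $\A^*$ whose objective value $F(y) := \sum_k \la^k u^k(y^k) - \delta \sum_k \norm{y^k - \one}$ strictly exceeds $F(x)$, contradicting $x = \phi(\la)$. The perturbations I will use only permute the tuple of bundles $(x^k)_k$ across agents, so $\sum_k \norm{y^k - \one}$ is automatically preserved and it suffices to strictly raise $\sum_k \la^k u^k(y^k)$.

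Consider first a pairwise swap. For any $i \in \supp(\la)$ and any $j$ envied by $i$ in the equal-type $\ep$-justified sense, the allocation $x_{i\leftrightarrow j}$ lies in $\AC$ (since $i \sim j$) and is $\ep$-IR (since $u^i(x^j) > u^i(x^i) \ge \tu^i - \ep$ and $u^j(x^i) > \tu^j - \ep$), so $x_{i\leftrightarrow j} \in \A^*$. Its change in weighted utility is
\[
G_i(j) \;=\; \la^i [u^i(x^j) - u^i(x^i)] \;+\; \la^j [u^j(x^i) - u^j(x^j)].
\]
If there exist such $i$ and $j$ with $G_i(j) > 0$, then $F(x_{i\leftrightarrow j}) > F(x)$ and the swap already delivers the contradiction.

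Otherwise $G_i(j) \le 0$ for every $i \in \supp(\la)$ and every $j$ envied by $i$. Since the first bracket is strictly positive, this forces $\la^j [u^j(x^j) - u^j(x^i)] > 0$, hence $\la^j > 0$; every envy target of an agent in $\supp(\la)$ therefore lies again in $\supp(\la)$. Choose any selection $j(\cdot) : \supp(\la) \to \supp(\la)$. By finiteness, iterating $j(\cdot)$ yields a cycle $i_1 \to i_2 \to \cdots \to i_K \to i_1$ inside $\supp(\la)$. Transitivity of $\sim$ places all $i_k$ in one equivalence class, so the cyclic rotation assigning $i_k$ the bundle $x^{i_{k+1}}$ (indices mod $K$) can be written as $K-1$ successive equal-type pairwise swaps and thus lies in $\AC$; the rotated allocation is $\ep$-IR since $u^{i_k}(x^{i_{k+1}}) > u^{i_k}(x^{i_k}) \ge \tu^{i_k} - \ep$. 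Its weighted utility exceeds that of $x$ by $\sum_k \la^{i_k} [u^{i_k}(x^{i_{k+1}}) - u^{i_k}(x^{i_k})] > 0$, again contradicting optimality.

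The main obstacle is that a single pairwise swap $i \leftrightarrow j(i)$ may fail to improve the objective when $\la^{j(i)} > 0$, because the responding agent's loss $\la^{j(i)} [u^{j(i)}(x^{j(i)}) - u^{j(i)}(x^i)]$ can outweigh $i$'s gain. The dichotomy above resolves this: either some pairwise swap works, or the very obstruction $G_i(j) \le 0$ certifies that all envy chains stay inside $\supp(\la)$, producing a cycle whose rotation improves every participant's utility simultaneously — with transitivity of $\sim$ being precisely what makes that cyclic rotation feasible in $\AC$.
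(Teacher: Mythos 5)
Your proof is correct and follows essentially the same route as the paper: permutations of the bundles leave the penalty term $\sum_k \norm{y^k-\one}$ unchanged, so it suffices to raise the weighted utility either by a single equal-type swap or by rotating a cycle of envious agents inside $\supp(\la)$, with transitivity of $\sim$ guaranteeing feasibility of the rotation. The only (immaterial) difference is the case split: the paper distinguishes whether some envy target has zero weight, while you distinguish whether some pairwise swap improves the objective, and your inequality $G_i(j)\le 0$ then forces all targets into $\supp(\la)$, after which the cycle argument coincides with the paper's.
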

\begin{proof}
	Suppose, towards a contradiction, that for some $ \la \in \Delta$,
	$\la \notin \cup_{i\in \supp(\la)} \bla^i  $. Let $x=\phi(\la)$. Then
	for every $i\in \supp (\la)$ there exists some $j$ of equal type with $ i $ such that	$u^i(x^j)>u^i(x^i)$ and $u^j(x^i)>\tu^j-\ep$. 
	
	Suppose first that there exist some $i$ and $j$ in the
	aforementioned situation such that
	$j\notin \supp (\la)$. Then consider the allocation $ y
        =x_{i\leftrightarrow j}\in \AC$. $ y $ is $\ep$-individually
	rational as $x$ was $\ep$-individually rational and
        $u^j(x^i) > u^j(\w^j)-\ep$. Note that $\la^j=0$ and $u^i(x^j)>u^i(x^i)$ imply that $\sum_{h\in I} \la^h
        u^h(x^h) < \sum_{h\in I}  \la^h
	u^h(y^h)$. We also have that  $\sum_{h\in I} \norm{x^h - \one} = \sum_{h\in
		I} 
	\norm{y^h -\one}$, hence \[
	\sum_{h\in I} \la^h u^h(x^h) -\delta \sum_{h\in I} \norm{x^h- \one}
	< \sum_{h\in I} \la^h u^h(y^h) -\delta  \sum_{h\in I} \norm{y^h
		-\one}.\] 
	But it contradicts the definition of $ x=\phi(\la)$.
	
	The above argument means that every $i\in \supp (\la)$ has equal-type $\ep$-justified 
	envy towards some $ j\in \supp (\la) $. Then, since the set of agents
	in $\supp(\la)$ is finite, there must exist a subset of distinct agents $\{i_1,\ldots i_K\}\subseteq \supp(\la)$ such that $i_1$ has equal-type $ \ep $-justified envy
	towards $i_2$, $i_2$ has equal-type $ \ep $-justified envy towards
	$i_3$, and so on until $i_K$ has equal-type $ \ep $-justified envy
	towards $i_1$. Then we can construct a new allocation $ y $ by letting
	agents in the cycle exchange their allocations. Since the
        agents in the cycle are of equal type, $ y $ must be
        feasible, that is, $ y\in \AC $.\footnote{We can consider a sequence of allocations $\{x(k)\}_{k=0}^{K-1}$ with $x(0)=x$ and $x(k) =
          x_{i_{k}\leftrightarrow i_{k+1}}(k-1)$ for each $ 1\le k\le K-1 $. Since all agents in the cycle are of equal type, each  $x(k)\in  \AC$. We let $y=x(K-1)$.} 
      As before, we have that  $\sum_{h\in I} \norm{x^h - \one} = \sum_{h\in I} \norm{y^h-\one}$ because $y$ is obtained from $x$ by permuting the assignments of agents in the cycle. Then we have
	\[
	\sum_{h\in I} \la^h u^h(x^h) -\delta \sum_{h\in I} \norm{x^h - \one}
	< \sum_{h\in I} \la^h u^h(y^h) -\delta  \sum_{h\in I} \norm{y^h -\one}.\] 
	
	As before, it is a contradiction. 
\end{proof}

Now we are ready to prove Theorem~\ref{thm:existenceconstraints}.

\begin{proof}[\color{blue}Proof of Theorem~\ref{thm:existenceconstraints}\color{black}]
	The proof is an application of the KKM lemma: see Theorem 5.1 in
	\cite{border1989fixed}.  
	
	By Lemmas~\ref{lem:Ciclosed} and~\ref{lem:KKMconditions}, $ \{\bla^i
	\}_{i=1}^n $ is a KKM covering of $ \Delta $. So there exists
	$\la^*_\ep\in\cap_{i=1}^n \bla^i$. Let $x^*_\ep=\phi(\la^*_\ep)$. Then
	$x^*_\ep$ is $ \ep $-individually rational, $ \ep $-Pareto optimal and
	has no equal-type $ \ep $-justified envy. 
	
	Now let $ \{\ep_n\} $ be a sequence such that $ \ep_n>0
	$ for all $ n $ and $ \ep_n\rightarrow 0 $. Let $ x^*_n $ be the
	allocation found above for each $ \ep_n $. Since the sequence $
	\{x^*_n\} $ is bounded, it has a subsequence $ \{x^*_{n_k}\} $ that
	converges to some $ x^* $. Since the set of feasible allocations is closed, $
	x^* $ is a feasible allocation. We prove that $ x^* $ is individually
	rational, weak Pareto optimal and has no strong equal-type justified envy. 
	
	Since $ u^i(x^{*i}_{n_k})\ge \tu^i-\ep_{n_k}$ for all $ n_k $ and
	all $ i $, in the limit $ u^i(x^{*i})\ge \tu^i  $ for all $ i
	$. So $ x^* $ is individually rational. Suppose $ x^* $ is not weak
	Pareto optimal, then there exists a feasible allocation $ y $ such that $
	u^i(y^i)>u^i(x^{*i}) $ for all $ i $. For big enough $ n_k $, $
	u^i(y^i)>u^i(x^{*i}_{n_k})+\ep_{n_k} $ for all $ i $, which
	contradicts the $ \ep_{n_k} $-Pareto optimality of $ x^*_{n_k} $.
	Suppose some agent $ i $ has strong equal-type justified-envy towards another agent $j$ in $x^* $; that is, $ u^i(x^{*j})> u^i(x^{*i})  $ and $u^j(x^{*i})>\tu^j $. Then for big enough $
	n_k $, $ u^i(x^{*j}_{n_k})> u^i(x^{*i}_{n_k})  $ and $
        u^j(x^{*i}_{n_k})>\tu^j-\ep_{n_k} $. But given that  $i$ and
        $j$ are of equal type, this contradicts the property of no
        equal-type $\ep_{n_k}$-justified envy of $ x^*_{n_k} $. 
\end{proof}

\section{Proof of Theorem~\ref{thm:NJEeq}}\label{sec:proofNJEeq}

We let the $L$-dimensional simplex $\Delta^L $ be the domain of prices.

\subsection{Incomes}\label{sec:incomes}
The key to the theorem is to carefully construct price-dependent income functions. For each consumer $i$, define $i$'s \textit{expenditure function} as 
\begin{equation*}
e^i(v,p) = \inf \{p\cdot x : u^i(x)\geq v \}, 
\end{equation*}
for $p\in\Delta^L$ and $v\in\mathbf{R}$.

Let $v^i = \sup u^i(C^i)$ be the utility of agent $i$ when she is satiated.

For any scalar $m\geq 0$ and $p\in \Delta^L$, let 
\begin{equation*}
\mu^i (m,p) = \mbox{median}(\{e^i(\tu^i,p),m, e^i(v^i ,p) \}).
\end{equation*}

Consider the function 
\begin{equation*}
\varphi(m,p) = \sum_i \mu^i(m,p) - p\cdot Q. 
\end{equation*}

Observe that
\begin{itemize}
\item $e^i(\tu^i,p) \leq e^i(v^i ,p) $.

\item $\mu^i$ is continuous and $m\mapsto \mu^i(m,p)$ weakly monotone
increasing.

\item $\varphi$ is continuous and $m\mapsto \varphi(m,p)$ weakly monotone
increasing.

\item $\varphi(m,p)\leq 0$ for $m\geq 0$ small enough as $%
\sum_ie^i(\tu^i,p)\leq p\cdot Q$ (since an IR allocation $x$ exists, $e^i(\tu^i,p)\leq p\cdot x^i$ for all $i$, and $\sum_ip\cdot x^i=p\cdot Q$.)
\end{itemize}

We shall define $m^i(p)$. First, in the case that $\sum_i e^i(v^i, p) <
p\cdot Q$, we let $m^i(p) = e^i(v^i, p)+\frac{1}{N}[p\cdot Q-\sum_i e^i(v^i, p)]$.  Second, in the case that $\sum_i
e^i(v^i, p) \geq p\cdot Q$, we have that $\varphi(m,p)\leq 0$
for $m\geq 0$ small enough, and $\varphi(m,p)\geq 0$ for $m\geq 0$ large
enough. Therefore there exists $m^*\geq 0$ with $\varphi(m^*,p)=0$.

Now let $m^i(p) = \mu^i(m^*,p)$. To show that this is well defined, we need
to prove that $m^i(p)$ is independent of the choice of $m^*$. To that end,
suppose that there are $m_1,m_2\in\mathbf{R}_+$ with $m_1\neq m_2$ and $%
0=\varphi(m_1,p) = \varphi(m_2,p)$. Suppose without loss of generality that $m_1<m_2$. Now, since
each $\mu^i$ is weakly monotone increasing as a function of $m$ we must have 
$\mu^i(m_1,p) = \mu^i(m_2,p)$ for all $i$. Then the definition of $m^i(p)$
is the same regardless of whether we choose $m_1$ or $m_2$.

Note that, in all cases, $p\cdot Q = \sum_i m^i(p)$.

\begin{lemma}\label{lem:micont}
$m^i$ is continuous.
\end{lemma}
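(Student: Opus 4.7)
The plan is to reduce continuity of $m^i$ to joint continuity of the auxiliary functions $\mu^i$ and $\varphi$, and then do a case analysis on the sign of
\[ F(p) = \sum_{i\in I} e^i(v^i,p) - p\cdot Q. \]
First I would establish continuity of $p \mapsto e^i(\tu^i,p)$ and $p \mapsto e^i(v^i,p)$ on the simplex $\Delta^L$. Because $C^i$ is compact and $u^i$ is continuous on $C^i$, each upper contour set $\{x\in C^i : u^i(x)\geq v\}$ is a non-empty compact set (non-empty for $v=\tu^i$ by the hypothesis of an IR allocation, and for $v=v^i=\sup u^i(C^i)$ because the continuous $u^i$ attains its maximum on compact $C^i$) that does not vary with $p$. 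The maximum theorem applied to the minimization of the continuous linear functional $x\mapsto p\cdot x$ over this fixed compact set yields continuity of $e^i(v,\cdot)$. Hence $(m,p)\mapsto\mu^i(m,p)$ and $(m,p)\mapsto\varphi(m,p)$ are jointly continuous, and $F$ is continuous.

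Next, if $F(p_0)<0$, continuity of $F$ places a neighborhood of $p_0$ strictly in Case~1, where $m^i(p)= e^i(v^i,p)+\tfrac{1}{N}\bigl[p\cdot Q - \sum_{j\in I} e^j(v^j,p)\bigr]$ is a continuous expression in $p$, so convergence is immediate. If $F(p_0)>0$, a neighborhood of $p_0$ lies strictly in Case~2. Given $p_n\to p_0$, pick any $m^*_n\geq 0$ with $\varphi(m^*_n,p_n)=0$. I would first argue $\{m^*_n\}$ is bounded: otherwise, along a subsequence $m^*_n\to\infty$, and eventually $m^*_n>\max_i e^i(v^i,p_n)$, so $\mu^i(m^*_n,p_n)=e^i(v^i,p_n)$ and $\varphi(m^*_n,p_n)=F(p_n)\to F(p_0)>0$, contradicting $\varphi(m^*_n,p_n)=0$. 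Passing to a convergent sub-subsequence $m^*_{n_k}\to m^*_0\geq 0$, joint continuity of $\varphi$ gives $\varphi(m^*_0,p_0)=0$, so $m^*_0$ is a valid choice at $p_0$; by the well-definedness observation made just before the lemma, $m^i(p_{n_k}) = \mu^i(m^*_{n_k},p_{n_k})\to\mu^i(m^*_0,p_0)=m^i(p_0)$. Every subsequence admits such a sub-subsequence with the same limit, so $m^i(p_n)\to m^i(p_0)$.

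The delicate case, and the main obstacle, is the boundary $F(p_0)=0$, because $p_n\to p_0$ can mix both regimes and the chosen $m^*_n$ in the Case~2 branch can diverge. At $p_0$ itself both formulas yield $m^i(p_0)=e^i(v^i,p_0)$ (in Case~1 the corrective term vanishes, in Case~2 any $m^*_0\geq \max_j e^j(v^j,p_0)$ solves $\varphi(\cdot,p_0)=0$). Given $p_n\to p_0$, I would take any subsequence and extract a sub-subsequence $p_{n_k}$ that lies eventually in one regime. In the Case~1 branch, continuity of the closed-form expression together with $F(p_{n_k})\to 0$ yields $m^i(p_{n_k})\to e^i(v^i,p_0)=m^i(p_0)$. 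In the Case~2 branch, if $\{m^*_{n_k}\}$ has a bounded sub-subsequence the previous paragraph's argument transfers verbatim; if instead $m^*_{n_k}\to\infty$, eventually $\mu^i(m^*_{n_k},p_{n_k})=e^i(v^i,p_{n_k})\to e^i(v^i,p_0)=m^i(p_0)$. Every sub-subsequential limit equals $m^i(p_0)$, so by the standard subsequence criterion $m^i(p_n)\to m^i(p_0)$, completing the proof.
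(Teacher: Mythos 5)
Your proof is correct and follows essentially the same route as the paper's: continuity of the expenditure functions, joint continuity of $\mu^i$ and $\varphi$, a case split on the sign of $\sum_i e^i(v^i,p)-p\cdot Q$ at the limit price, and subsequence arguments combined with the well-definedness of $m^i$ (independence of the chosen root of $\varphi(\cdot,p)=0$). Your handling of the boundary case and of the boundedness of the roots is somewhat more explicit than the paper's, but it is the same argument in substance.
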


\begin{proof}
Let $p^n\rightarrow p\in \Delta^L$. Note that if $\sum_i e^i(v^i,p) -
 p\cdot Q<0$, then for $n$ large enough we will have $\sum_i
e^i(v^i,p^n) -  p^n\cdot Q<0$. Then $m^i(p^n) =
e^i(v^i, p^n)+\frac{1}{N}[p^n\cdot Q-\sum_i e^i(v^i, p^n)]\rightarrow e^i(v^i, p)+\frac{1}{N}[p\cdot Q-\sum_i e^i(v^i, p)]=m^i(p)$, by continuity of the expenditure
function.

So suppose that $\sum_i e^i(v^i,p) - p\cdot Q\geq 0$, and let $m$ be such that $\varphi(m,p)=0$. We shall discuss two cases.

Case1: Consider the case that $\sum_i e^i(v^i,p^{n_k}) - p^{n_k}\cdot
Q < 0$ for some subsequence $p^{n_k}$. Then $\sum_i e^i(v^i,p) -
 p\cdot Q = 0$. This means that if $\varphi(m,p)=0$ then $m\geq
e^i(v^i,p)$ for all $i$. Hence $m^i(p)=e^i(v^i,p)$ for all $i$. But since $m^i(p^{n_k})
= e^i(v^i,p^{n_k})+\frac{1}{N}[p^{n_k}\cdot Q-\sum_i e^i(v^i, p^{n_k})]$, we get that $m^i(p^{n_k})\rightarrow m^i(p)$.

Case 2: Now turn to a subsequence $p^{n_k}$ with $\sum_i e^i(v^i,p^{n_k}) -
 p^{n_k}\cdot Q \geq 0$. Then there is $m^{n_k}$ with $\varphi(m^{n_k},p^{n_k})=0$. We can take this sequence to be bounded:
consider any further convergent subsequence $m^{n_k^{\prime }}$ and say that 
$m^{n_k^{\prime }}\rightarrow m^{\prime }$. Then $0=\varphi(m^{n_k^{\prime
}},p^{n_k^{\prime }})\rightarrow \varphi(m^{\prime },p)$. Thus $m^i(p^{n_k^{\prime }}) = \mu^i(m^{n_k^{\prime }},p^{n_k^{\prime
}})\rightarrow \mu^i(m^{\prime },p)$, as $\mu^i$ is continuous.
Since the sequence $\{m^{n_k}\}$ is bounded, this implies that $%
m^i(p^{n_k})\rightarrow m^i(p)$.

Cases 1 and 2 exhaust all possible subsequences of $p^n$.
\end{proof}

The role of the following lemma will be clear towards the end of the
proof. 

\begin{lemma}\label{lem:prepenvy}
  If $m^i(p)<\min\{m^j(p),e^i(v^i,p)\}$ then $m^j(p) =
  e^j(\tu^j,p)$. 
\end{lemma}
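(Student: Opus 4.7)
The plan is to unwind the definition of $m^i(p)$ and $m^j(p)$ as medians, use the hypothesis $m^i(p)<e^i(v^i,p)$ to rule out the ``low-expenditure'' case, and then use $m^i(p)<m^j(p)$ to locate the threshold value $m^*$ below $e^j(\tu^j,p)$.

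First I would argue that we are in the regime $\sum_{k} e^k(v^k,p)\geq p\cdot Q$. Indeed, in the opposite case the construction gives $m^i(p)=e^i(v^i,p)+\frac{1}{N}[p\cdot Q-\sum_k e^k(v^k,p)]>e^i(v^i,p)$, which contradicts the hypothesis $m^i(p)<e^i(v^i,p)$. So there exists $m^*\geq 0$ with $\varphi(m^*,p)=0$ and, for every agent $k$, $m^k(p)=\mu^k(m^*,p)=\mbox{median}\{e^k(\tu^k,p),m^*,e^k(v^k,p)\}$, where by definition $e^k(\tu^k,p)\leq e^k(v^k,p)$.

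Next I would read off two consequences of $m^i(p)<e^i(v^i,p)$: since the median of three reals equals the largest one only when the other two are at least as large as the third-largest, the inequality forces $m^*< e^i(v^i,p)$, and therefore $m^i(p)=\max\{e^i(\tu^i,p),m^*\}$. In particular $m^*\leq m^i(p)$. Now consider the three possible positions of $m^*$ relative to the pair $(e^j(\tu^j,p),e^j(v^j,p))$:
\begin{itemize}
\item If $e^j(\tu^j,p)<m^*<e^j(v^j,p)$, then $m^j(p)=m^*\leq m^i(p)$, contradicting $m^i(p)<m^j(p)$.
\item If $m^*\geq e^j(v^j,p)$, then $m^j(p)=e^j(v^j,p)\leq m^*\leq m^i(p)$, again contradicting the hypothesis.
\item If $m^*\leq e^j(\tu^j,p)$, then $m^j(p)=e^j(\tu^j,p)$, which is the desired conclusion.
\end{itemize}
Only the last case survives, yielding $m^j(p)=e^j(\tu^j,p)$.

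I do not expect any serious obstacle here; the lemma is a purely combinatorial statement about medians once the two-case definition of $m^i(p)$ has been correctly parsed. The only subtlety is keeping track of weak versus strict inequalities, which is why the hypothesis is stated with a strict inequality on both sides of the min---the strict $m^i(p)<e^i(v^i,p)$ is what lets us conclude $m^*<e^i(v^i,p)$ and hence $m^*\leq m^i(p)$, and the strict $m^i(p)<m^j(p)$ is what rules out the two ``bad'' positions of $m^*$ in the case analysis above.
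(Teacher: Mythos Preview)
Your proof is correct and follows essentially the same approach as the paper's: rule out the case $\sum_k e^k(v^k,p)<p\cdot Q$ using $m^i(p)<e^i(v^i,p)$, deduce $m^*\leq m^i(p)<m^j(p)$, and conclude from the median definition that $m^j(p)=e^j(\tu^j,p)$. The paper compresses your three-case analysis into a single line (``$m^*<m^j(p)=\mu^j(m^*,p)$ implies $m^j(p)=e^j(\tu^j,p)$''), but the underlying reasoning is identical.
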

\begin{proof} Since $m^i(p)<e^i(v^i,p)$, we must be in the case $\sum_i
e^i(v^i, p) \geq  p\cdot Q$ of the definition of income
functions. So let  $m^*\geq 0$ with $\varphi(m^*,p)=0$.

Since $m^i(p) = \mu^i(m^*,p)< e^i(v^i,p)$, we must have $m^*\leq
m^i(p)$. By hypothesis, $m^*<m^j(p)$. Then $m^j(p) = \mu^j(m^*,p)$
implies that $m^j(p) = e^j(\tu^j,p)$.
\end{proof}

\subsection{Existence of quasi-equilibrium}

We first establish the existence of a quasiequilibrium with $p^*\neq
0$. The argument is similar to \cite{galemascolell1975}. See also
\cite{mwg1995} (Chapter 17, Appendix B). 

For any $p\in \Delta^L$, let $\dminexp^{i}(p)$ be the set of
vectors $x^{i\prime }\in C^i$ that satisfy the following properties:
\begin{eqnarray*}
p\cdot x^{i\prime } &\leq &m^{i}(p) \\ 
u^{i}(x^{i\prime }) &\geq &u^{i}(\hat x^{i})\text{ for all }%
\hat x^{i}\in C^i\text{ with }p\cdot \hat  x^{i}<m^{i}(p).
\end{eqnarray*} We consider the correspondence $p\mapsto \dminexp^i(p)$
with domain in $\Delta^L$. 

Observe that $\varnothing \neq \underset{x^{i\prime }\in C^i}{\arg \max }\left\{
u^{i}(x^{i\prime }):p\cdot x^{i\prime }\leq m^{i}(p)\right\} \subseteq
\dminexp^{i}(x,p)$). So $\dminexp^i$ takes non-empty values.

Observe also that $\dminexp^i$ is convex valued. To see this, let
$z^i,y^i\in \dminexp^i(p)$ and define $x^i(\al) = \al z^i+(1-\al)
y^i$, for $\al\in [0,1]$. It is obvious that $x^i(\al)\in C^i$
and that $p\cdot x^i(\al)\leq m^i(p)$.  For any $\hat x^i\in C^i$
with $p\cdot \hat  x^{i}<m^{i}(p)$, $\min\{ u^i(z^i),u^i(y^i)\}\geq
u^i(\hat x^i)$ and quasi-concavity of $u^i$ imply that
$u^i(x^i(\al))\geq u^i(\hat x^i)$. Thus $x^i(\al)\in\dminexp^i(p)$. 

A third observation is that $\dminexp^i(p)$ is upper-hemicontinuous. To
this end, consider a sequence $p_n$ in $\Delta^L$ with $p_n\rightarrow 
p\in \Delta^L$. Consider $z^i_n\in \dminexp^i(p_n)$ such that $z^i_n\rightarrow
z^i$. Clearly, $z^i\in C^i$ and $p\cdot z^i\leq m^i(p)$ as $m^i$
is continuous (Lemma~\ref{lem:micont}). Moreover, for any $\hat x^i\in
 C^i$ with $p\cdot \hat x_i<m^i(p)$, we have that
$p_n\cdot \hat x_i<m^i(p_n)$ for $n$ large enough (again by  Lemma~\ref{lem:micont}).
Thus $u^i(z^i_n)\geq u^i(\hat x^i)$ for $n$ large enough, which by
continuity of $u^i$ implies that $u(z^i)\geq u^i(\hat x^i)$. Hence
$z^i\in\dminexp^i(p)$.

For any $x\in \times_i C^i$ and $p\in\Delta^L$, let \[\maxval(x,p)= \argmax \{p\cdot \left( \sum_i x^i - Q
\right) : p\in \Delta^L\},
\] and  consider the correspondence
\[
\xi: \times_i C^i\times \Delta^L \intoo  \times_i C^i\times \Delta^L \]
defined by 
$\xi (x^1,\ldots,x^N,p) = (\times_i \dminexp^i(p))\times \maxval(x,p)$. 

By the previous observations, and the maximum theorem, $\xi$ is in
the hypotheses of Kakutani's fixed point theorem.  Let $(x^*,p^*)$ be
a fixed point of $\xi$. 

We argue that $(x^{\ast },p^{\ast })$ is a Walrasian quasiequilibrium.
We have that $p^{\ast }\cdot x^{i\ast }\leq m^{i}(p^{\ast })$ for every $i$,
by construction of $\xi$. By definition of $m^{i}$, we have
$\sum_{i}m^{i}(p^{\ast })= p^{\ast }\cdot Q$. Hence, $p^{\ast }\cdot \left( \sum_{i}x^{i\ast }-Q\right) \leq 0.$
This implies $\sum_{i}x^{i\ast }-Q\leq 0$ since otherwise,
by definition of $\maxval$, we would have \[p^{\ast }\cdot \left(
\sum_{i}x^{i\ast }-Q\right) =\underset{p^{\prime }\in
\Delta^L } {\max } \big\{p^{\prime }\cdot \left( \sum_{i}x^{i\ast }-Q\right)  \big\} >0.\]

We show that $\sum_{i}x^{i\ast }-Q=0$. We first consider the case $\sum_ie^i(v^i,p^*)<p^*\cdot Q$. By definition of $m^i$, all agents $i$ have $m^i(p^*)>e^i(v^i,p^*)$ so they must be satiated following the definition of $\dminexp^{i}$. By monotonicity of preferences, we observe $\sum_lx^{i*}_l=c^i$, hence \[ \sum_i\sum_lx^{i*}_l=\sum_ic^i\geq \sum_lq_l\] where the inequality comes from the no overall excess supply assumption. Given that we knew $\sum_{i}x^{i\ast }-Q\leq 0$, we conclude $\sum_{i}x^{i\ast }-Q= 0$. 

We then consider the case $\sum_ie^i(v^i,p^*)\geq p^*\cdot Q$. We claim that $p^{\ast }\cdot x^{i\ast }=m^{i}(p^{\ast })$ for
every $i$, since by definition of $m^{i}$ we have $m^{i}(p^{\ast
})\leq  e^{i}(v^{i},p^{\ast })$. Indeed, suppose that $p^{\ast }\cdot x^{i\ast
}<m^{i}(p^{\ast })\leq e^{i}(v^{i},p^{\ast })$. Since $x^{i\ast }$ does not
satiate the agent, for an arbitrarily small ball $B$ around $x^{i\ast }$\
there is $x^{i\prime }\in B$ with $u^{i}(x^{i\prime })>u^{i}(x^{i\ast })$
and $p^{\ast }\cdot x^{i\prime }<m^{i}(p^{\ast })$, contradicting $x^{i\ast
}\in \dminexp^{i}(x^{\ast },p^{\ast })$. Observe that, as a consequence of
the above,
\begin{equation}
  \label{eq:sumweqsummi}
 p^*\cdot Q = \sum_i m^i(p^*)= \sum_i p^*\cdot x^{i *} .
  \end{equation}

Consequently,  $p^{\ast }\cdot \left(
\sum_{i}x^{i\ast }-Q\right) =0.$ Since $\sum_{i}x^{i\ast
}-Q\leq 0,$ we obtain $p_{l}^{\ast }=0$ for any $l$
with $\sum_i x^{\ast i}_l - q_l<0$ (underdemanded
objects). Then, since preferences are monotonic, it is wlog to assume
that $\sum_{i}x^{i\ast }-Q=0$ by consuming the remaining units
of underdemanded objects for free.

This proves that $(x^{\ast },p^{\ast })$ is a Walrasian quasiequilibrium.

\subsection{Existence of equilibrium}

We prove now that $(x^{\ast },p^{\ast })$ is a Walrasian equilibrium in the cases considered in the Theorem. In all cases we prove that, for each agent $i$, either $m^i(p^\ast)>0$ or else the null bundle 0 is the only affordable one. A standard argument follows converting such quasiequilibrium into an equilibrium. Suppose $m^i(p^\ast)>0$ and there exists $y^i\in C_i$ such that $u^i(y^i)>u^i(x^{i\ast})$ and $p^*\cdot y^i\leq m^i(p^\ast)$. Then, for $\lambda<1$ sufficiently close to 1, $\lambda y^i\in C_i$, $p^*\cdot \lambda y^i< m^i(p^\ast)$ and, by continuity of preferences, $u^i(\lambda y^i)>u^i(x^{i\ast})$, contradicting $x^{i\ast}\in \dminexp^i(p^*)$. For the remaining case $m^i(p^\ast)=0$, if 0 is the sole affordable bundle, then $0=x^{\ast i}$ trivially is $i$'s optimal choice subject to her budget constraint.

In all cases, we skip the possibility of $\sum_ie^i(v^i,p^*)<p^*\cdot Q$. By definition of $m^i$, all agents $i$ would have $m^i(p^*)>e^i(v^i,p^*)$ so they would certainly be satiated following the definition of $\dminexp^{i}$. Therefore we would trivially have an equilibrium. Note that, by skipping such a possibility, we have $p^{\ast }\cdot x^{i\ast }=m^i(p^\ast)$ for all $i$.

We first consider the case $\min_ic^i>\sum_lq_l$. We show that $p^{\ast }>>0$. Suppose, by way of contradiction, that $p^*_l=0$ for some good $l$. Since $p^*\cdot Q>0$ and $\sum_{i}x^{i\ast }-Q=0$, there must be an individual $j$ with $p^{\ast }\cdot x^{j\ast }=m^j(p^\ast)>0$. Take a vector $\delta $ containing zeros in all coordinates but $l$, where it contains $\epsilon>0$. Notice that $c^j>\sum_lq_l$ implies  $x^{j\ast }+\delta\in C^i$ for $\epsilon$ small enough. By monotonicity and continuity of preferences, and since $x^{j\ast }+\delta$ is also affordable, a standard argument shows that $x^{j\ast }$ is not a quasiequilibrium allocation for $j$ under prices $p^{\ast }$. We conclude that $p^{\ast }>>0$. Consequently, for each agent $i$ with $m^i(p^\ast)=0$, the 0 bundle is her only affordable bundle.

We now consider the case when both Inada condition $u^i(x^i)=u^i(0)$ unless $x^i>>0$ and $\tu^i>u^i(0)$ hold for all $i$. $p^\ast\in\Delta^L$ contains at least one strictly positive price, thus $m^i(p^\ast)\geq e^i(\tu^i,p^\ast)>0$ for all $i$.

Lastly, we consider the case that a common favorite object $l$ and a strictly positive IR allocation $\tilde{x}$ both exist. We argue that $p_{l}^{\ast }>0.$ Suppose that $%
p_{l}^{\ast }=0.$ Since $p^{\ast }\in \Delta^L ,$ there must be an object $k\neq l
$ with $p_{k}^{\ast }>0.$ For any agent $i$ who is consuming object $k$, substituting his consumption of object $k$ for an equal consumption of
object $l$ saves expenses and increases utility. Hence $x^{i\ast }\notin
\dminexp^i (p^{\ast })$. This contradiction shows that $p_{l}^{\ast}>0.$
Notice that, in consequence, $e^{i}(v^{i},p^{\ast })>0$ for all $i$. 

Our next step is to establish that $\ul m=\min\{ m^i(p^*) : 1\leq i \leq
I\}>0$. First, if $\ul m=\min\{ e^i(v^i,p^*) : 1\leq i \leq I\}$ then we
are done because $e^i(v^i,p^*)>0$ for all $i$. Ruling out this case,
there must exist $i$ with $m^i(p^*)<e^i(v^i,p^*)$, which implies that
$\sum_i m^i(p^*) = p^*\cdot Q=\sum_i p^*\cdot \tilde{x}^i$. Now, if
\begin{equation}
  \label{eq:ulm}
\ul m=\min\{ e^i(\tu^i,p^*): 1\leq i \leq I\},
\end{equation}
then there is $h$ with $\ul m =
m^h(p^*) \leq e^i(\tu^i,p^*)$ for all $i$; which implies by the
definition of the income functions that $m^i(p^*) =
e^i(\tu^i,p^*)$ for all $i$. 
But $e^i(\tu^i,p^*)\leq p^*\cdot \tilde{x}^i$ for all $i$ and 
\[
\sum_i e^i(\tu^i,p^*) = \sum_i m^i(p^*) =  p^*\cdot Q=\sum_i p^*\cdot\tilde{x}^i  \]
imply that
$e^i(\tu^i,p^*)= p^*\cdot \tilde{x}^i$  for all $i$. So $ m^i(p^*) = p^*\cdot \tilde{x}^i$ for all $ i $. Because $\tilde{x}^i>>0$, $m^i(p^*)>0$ for all $ i $. 

Finally, if Equation~\eqref{eq:ulm} does not hold, then
$0\leq \min\{ e^i(\tu^i,p^*): 1\leq i \leq I\}<\ul m$. So  $m^i(p^*)>0$ for all $ i $.

\subsection{Properties of a competitive equilibrium allocation $x^*$}
\subsubsection{Pareto optimality}
We disregard the case $\sum_i e^i(v^i, p^*) <
p^*\cdot Q$ in which clearly every agent $i$ is satiated since $m^i(p^*)>e^i(v^i, p^*)$. In the cases that follow below, any satiated agent $i$ must have $m^i(p^*)=e^i(v^i, p^*)$.
Suppose that $y^i\in C^i$ and that $u^i(y^i)\geq
u^i(x^{i *})$. Then we must have $p^*\cdot y^i\geq m^i(p^*)$ because
otherwise $p^*\cdot y^i< m^i(p^*)\leq e^{i}(v^{i},p^{\ast })$ and if $i$ is
satiated, then $p^*\cdot y^i< e^{i}(v^{i},p^{\ast })$ and 
$u^i(y^i)\geq v^i$ contradicts the definition of $e^i$; if $i$ is not satiated, there would exists $z^i$ with 
$p^*\cdot z^i< m^i(p^*)$ and $u^i(z^i)>u^i(y^i)\geq u^i(x^{i *})$.

Obviously if $y^i\in C^i$ and  $u^i(y^i)>
u^i(x^{i *})$ then $p^*\cdot y^i> m^i(p^*)$.
So if $y=(y^i)$ Pareto dominates $x^*$ then $\sum_i p^*\cdot y^i>\sum_im^i(p^*)$.
But by Equation~\eqref{eq:sumweqsummi} this is impossible if $y$ is an
allocation. 

\subsubsection{Individual rationality}

To show that $x^*$ is individually rational it suffices to notice that $m^{i}(p^{\ast})\geq e^{i}(\tu^{i},p^{\ast })$ for all $i$. 

\subsubsection{No justified envy}

Suppose that $i$ envies $j$ at $x^*$. This
implies that $i$ is not satiated, hence $m^{i}(p^{\ast })<
e^{i}(v^{i},p^{\ast })$. It also implies that $m^i(p^*)<m^j(p^*)$ as
$m^i(p^*)<p^*\cdot x^{j*} \leq m^j(p^*)$. By Lemma~\ref{lem:prepenvy}, then,
$m^j(p^*) = e^j(\tu^j,p^*)$.

We obtain that 
\[p^*\cdot x^{i*}=m^i(p^*)< m^j(p^*)=e^j(\tu^j,p^*),\] and hence
$u^j(x^i)<\tu^j$ by definition of expenditure function. So $i$'s
envy is not justified.

\section{Proof of Theorem~\ref{thm:existence}.3 and Corollary \ref{cor:lipschitz}}\label{sec:prooflipschitz}

\subsection{Proof of Corollary \ref{cor:lipschitz}}\label{sec:proofcorollary}
We denote a constant of Lipschitz continuity common to all utility functions by $\theta$. Let $y$ be an IR allocation, which exists by assumption.
Consider an additional object $e\notin O$, and an $\alpha$-extended economy, for any $\alpha\in (0,1)$, with the capacity vector $Q^\alpha$:
\begin{eqnarray*}
q^\alpha_l=(1-\alpha)q_l, \ \forall l \in O, \text{and } q^\alpha_e=\alpha\sum_ic_i.
\end{eqnarray*}
Preferences in this extended economy are defined to be:
\[U^i((x_l)_{l\in O},x_e)=u^i((x_l)_{l\in O})+\theta x_e. \]
Notice that under this construction, $e$ is a common favorite good in this extended economy.
By Lipschitz continuity, the allocation $y^\alpha$ with $y^{i\alpha}_l=(1-\alpha)y^i_l$ for $l\in O$ and $y^{i\alpha}_e=\alpha c_i$ meets $U^i(y^\alpha_i)>\tu^i$ for all $i\in I$. Therefore, by continuity of preferences, for $\beta>0$ low enough, the allocation $\beta Q^\alpha /N+(1-\beta)y^\alpha $ is a strictly positive IR allocation in the extended economy.

Therefore, by Theorem~\ref{thm:NJEeq}, each $\alpha$-extended economy contains a Pareto-optimal, IR and NJE allocation $x^\alpha$. We construct a sequence $(x^\alpha)_{\alpha}$ where $\alpha$ tends to zero. Wlog such sequence converges to some allocation $x^*$ (if not, a subsequence does.) Such limit is an allocation in the original economy.

$x^*$ is weak Pareto optimal. Suppose not, then there is an allocation $x'$ that strongly Pareto dominates $x^*$. Consider the allocation $x'^\alpha$ in the $\alpha$-extended economy where $x^{i\prime\alpha }_l=(1-\alpha)x^{i\prime}_l$ for $l\in O$ and $x^{ i\prime}_e=\alpha c_i$, for each $i\in I$. By continuity of preferences and for low enough $\alpha$, we have that $x'^\alpha$ strongly Pareto dominates $x^\alpha$. This contradicts that $x^\alpha$ is Pareto optimal.

$x^*$ is IR, since $U^i(x^\alpha_i)\geq \tu_i$ for all $i\in I$ and all $\alpha$.

$x^*$ has no strong justified envy. Suppose not. Then, some agent $i$ envies some other agent $j$ at $x^*$ and $u^j(x^{i*})>\tu^j$. For $\alpha$ low enough, and by continuity of preferences, $i$ envies $j$ at $x^\alpha$ and $U^j(x^{i\alpha})>\tu^j$. But this contradicts the fact that $x^\alpha$ satisfies NJE. 

\subsection{Proof of Theorem~\ref{thm:existence}.3}\label{sec:proofthirdstatement}
Following the above proof and noticing that linear utilities are Lipschitz continuous, we just need to show that, in this particular case, $x^*$ constructed above is Pareto optimal instead of weak Pareto optimal. 

Suppose that $x^{\ast }$ is not Pareto optimal. Let $x'$ be an allocation with $x'_e=0$ that Pareto dominates $x^*$. For any $\ep\in (0,1)$, consider
\[ x_{\ep}^{\prime \alpha} = x^\alpha + \ep (x'-x^*),
\]
and observe that $x_{\varepsilon}^{\prime \alpha}\rightarrow x^{\ast }+\varepsilon (x^{\prime }-x^{\ast })$ as $\alpha \rightarrow 0$.

By linearity of preferences, and the fact that $x'$ Pareto dominates $x^*$, 
we have that $U^{i}(x_{\varepsilon}^{i\prime  \alpha})\geq U^{i}(x^{i\alpha})$ for all $i$ with at least one strict inequality. We
have seen, by Theorem~\ref{thm:NJEeq}, that each $x^\alpha$ in the sequence is Pareto optimal in its corresponding $\alpha$-extended economy. Consequently, for any $\alpha$ and any $\ep\in (0,1)$, $x_{\varepsilon}^{\prime \alpha}$ 
\textit{cannot} be an allocation in its corresponding $\alpha$-extended economy.

Now, given that 
\begin{eqnarray*}
\sum_{i}x_{\varepsilon}^{i\prime  \alpha} &=&\sum_{i}x^{i \alpha}+\varepsilon
(\sum_{i}x^{i\prime }-\sum_{i}x^{i\ast }) \\
&=&Q^\alpha +\varepsilon (Q-Q)=Q^\alpha,
\end{eqnarray*}%
the market clearing aspect of being an allocation is met. So for $x_{\ep}^{\prime \alpha}$ not to be an allocation it must be the case that, for every $\varepsilon $ and $\alpha,$ there is at least one agent $i$ such that $x_{\varepsilon}^{i\prime  \alpha}\notin C^i.$

Observe that $x_{\varepsilon}^{i\prime  \alpha}\notin C^i$ means that we are in one of two cases  
\begin{enumerate}
\item \label{it:case1} $\sum_l x_{\varepsilon l}^{i\prime  \alpha}>c^i$, or 
\item \label{it:case2} $x_{\varepsilon l}^{i\prime  \alpha} < 0$ for some $ l $ (or both). 
\end{enumerate}

Moreover, note by definition of $x_{\ep}^{\prime \alpha}$ that if we are in case~\eqref{it:case1} then we are in \eqref{it:case1} for any $\ep'\geq \ep$, and  if we are in case~\eqref{it:case2} then we are in \eqref{it:case2} for any $\ep'\geq \ep$. 

Consider a sequence $\alpha_t\rightarrow 0$ and an associated sequence of corresponding equilibrium allocations $x^{\alpha_t}\rightarrow x^*$. Let $I^1_t$ be the set of agents $i$ for whom $x^{i \prime  \alpha_t}_{\ep}$ is in  case~\eqref{it:case1}  for all $\ep\in (0,1)$, and $I^2_t$ be the set of agents $i$ for whom $x^{i\prime  \alpha_t}_{\ep}$ is in  case~\eqref{it:case2}  for all $\ep\in (0,1)$.

Given that for any $\ep\in (0,1)$, no matter how small, there exists $i$ with
$x_{\ep}^{i\prime  \alpha_t}\notin C^i $ for all $\ep'\geq\ep$, and that the set of agents is finite, $I^1_t\cup I^2_t\neq \os$ for all $t$. 

Suppose first that $I^1_t\neq \os$ for infinitely many $t$.  Again, since the set of agents is finite, we can wlog assume that there exists a subsequence with the property that $I^1_t$ is invariant for all $t$ large enough. Let $I^*$ denote that invariant set. Select an agent $i^{\ast }\in I^{\ast }.$ Then 
\[c^{i^*}<\sum_l x^{ i^*\prime \alpha_t}_{\ep l } = \sum_l x^{i^* \alpha_t}_{l} + \ep (\sum_l x^{i^* \prime}_{l} - \sum_l x^{i^*\ast }_{l})
\] for all $\ep$ means that $\sum_l x^{i^* \alpha_t}_{l} =c^{i^*}$, as $x^{i^* \alpha_t}_t\in C^{i^*}$. Since this is true for all $t$ large enough and $x^{ \alpha_t}\rightarrow x^*$, $\sum_l x^{i^* \ast}_{l} =c^{i^*}$. Now, we must have $\sum_l x^{i^* \prime}_{l} - \sum_l x^{i^* \ast}_{l} > 0$, which implies that $\sum_l x^{i^* \prime}_{l} >c^{i^*}$, contradicting that $x'$ is an allocation. 

Suppose now that $I^1_t= \os$ for all but finitely many $t$. Then  $I^2_t\neq \os$ for infinitely many $t$.  Again, since the set of agents is finite, we can wlog assume that there exists a subsequence with the property that $I^2_t=I^*\neq \os$ for all $t$ large enough.  Select an agent $i^{\ast }\in I^{\ast }.$ Using the finiteness of the number of objects, there exists $l$ with the property that for all $t$ large enough, \[ \forall\ep\in (0,1),x^{i^*\alpha_t}_{l} + \ep (x^{i^* \prime}_{l} - x^{i^*\ast  }_{l} ) =  x^{\prime i^* \alpha_t}_{\ep l } < 0.\] Given that $x^{i^*}_t\in C_i$, this can only be true for all $\ep$ if $x^{i^* \alpha_t}_{l } = 0$, and $x^{i^* \prime}_l - x^{i^*\ast }_l< 0$. Now $x^{\alpha_t}\rightarrow x^*$ means that $x^{i^*\ast }_{l } = 0$, so we have $x^{i^* \prime}_l< 0$, which contradicts that $x'$ is an allocation. 

\section{Proof of Theorem~\ref{thm:cyclicenvy}}\label{sec:proofcyclicenvy}

\subsection{Theorem~\ref{thm:existence}}
To prove that the first two statements of Theorem~\ref{thm:existence} hold as before when NJE is extended, we omit the steps in common and highlight the differences from the previous proof. Let $\bla^i$ be the set of all $\la\in\Delta$ at which $i$ has no $\ep$-justified envy by exchange towards any agent at $\phi(\la)$. We prove that the collection of $\bla^i$ is still a KKM covering of $\Delta$.

\begin{lemma}\label{lem:Ciclosedcyclic}
	For every $ i\in I $, $\bla^{i}$ is closed.
\end{lemma}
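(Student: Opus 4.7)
The plan is to adapt the proof of Lemma~\ref{lem:Ciclosed} essentially verbatim, replacing the single pair $(i,j)$ witnessing pairwise justified envy with a finite chain of distinct agents witnessing justified envy by exchange. The key conceptual point is that the new definition involves only finitely many strict inequalities (envy along a chain plus one participation slack $u^{i_K}(x^{i_1}) > \tu^{i_K}-\ep$), so continuity of each $u^{i_k}$ alone will propagate the witness to nearby allocations.

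Concretely, I would begin by taking a sequence $\la_n \in \bla^i$ with $\la_n \to \la \in \Delta$. Setting $x_n = \phi(\la_n)$ and using continuity of $\phi$ from the earlier arguments, we obtain $x_n \to x := \phi(\la)$. To show $\la \in \bla^i$, argue by contradiction: suppose $i$ has $\ep$-justified envy by exchange at $x$, witnessed by a sequence of distinct agents $(i_k)_{k=1}^K$ with $i_1 = i$, $u^{i_k}(x^{i_{k+1}}) > u^{i_k}(x^{i_k})$ for every $1 \le k \le K-1$, and $u^{i_K}(x^{i_1}) > \tu^{i_K} - \ep$.

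The next step is to observe that all $K$ inequalities involved in the witness are strict. Hence, by continuity of each $u^{i_k}$ and the convergence $x_n \to x$, there exists $N_0$ such that for all $n \ge N_0$ the same chain $(i_k)_{k=1}^K$ satisfies
\[
u^{i_k}(x_n^{i_{k+1}}) > u^{i_k}(x_n^{i_k}) \quad (1 \le k \le K-1), \qquad u^{i_K}(x_n^{i_1}) > \tu^{i_K} - \ep.
\]
This means that for $n \ge N_0$, agent $i$ has $\ep$-justified envy by exchange at $x_n = \phi(\la_n)$, contradicting $\la_n \in \bla^i$. Therefore $\la \in \bla^i$ and $\bla^i$ is closed.

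I expect no serious obstacle, since the argument only exploits continuity of finitely many utilities evaluated along a converging sequence of allocations. The only thing to be mindful of is that, unlike in the constrained setting of Lemma~\ref{lem:Ciclosed}, we do not need to verify that any exchange produces a feasible allocation here — closedness of $\bla^i$ is a statement purely about the envy condition at $\phi(\la)$, and the witnessing chain is the same finite object before and after passing to the limit.
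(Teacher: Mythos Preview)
Your proposal is correct and follows essentially the same approach as the paper's own proof: both argue by contradiction, take the finite witnessing chain $(i_k)_{k=1}^K$ at the limit allocation, and use continuity of the finitely many utilities involved to transport all the strict inequalities back to $x_n=\phi(\la_n)$ for large $n$. Your closing remark that no feasibility check is needed here (unlike in the constrained Lemma~\ref{lem:Ciclosed}) is also accurate.
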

\begin{proof}
	Let $\la_n$ be a sequence in $\bla^{i}$ such that $\la_n\rightarrow
	\la\in\Delta$. Let $x_n= \phi(\la_n)$. By continuity of $ \phi $,
	$x_n\rightarrow  x=\phi(\la)\in\A^*$. Now we prove that $\la\in 
	\bla^{i}$, that is, $ i $ does not have $\ep$-justified envy by exchange
	towards any other agent at $\phi(\la)$. Suppose that this is not the case. Then $i$ has $\ep$-justified envy by exchange towards some agent $j$, with the sequence $(i_k)_{k=1}^K$ being as in the definition of such envy.  By continuity of utility, and since the sequence $(i_k)_{k=1}^K$ is finite, for $n$ large enough we have $u^{i_k}(x^{i_{k+1}}_n) > u^{i_k}(x^{i_k}_n)$ for $ 1\le k\le K-1 $ while $u^{i_K}(x^{i_1}_n)>\tu^{i_K}-\ep$. So $i$ has $\ep$-justified envy by exchange towards $j$ at $x_n$, which is a contradiction. Therefore, $ \la\in \bla^{i}  $ and $\bla^i$ is closed.
\end{proof}

\begin{lemma}\label{lem:KKMconditionscyclic}
	For every $ \la\in \Delta $, $ \la \in \cup_{i\in \supp(\la)} \bla^i $.
\end{lemma}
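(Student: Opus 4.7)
The plan is to mirror the structure of the proof of Lemma~\ref{lem:KKMconditions}, splitting into two cases according to whether the witness chains end inside or outside $\supp(\la)$, and in each case constructing a permutation of $x=\phi(\la)$'s assignments that strictly improves the weighted objective, contradicting maximality. Since the perturbation I will build is always a permutation of assignments, the term $\delta\sum_h\|x^h-\one\|$ is unchanged, and it suffices to produce a strict increase in $\sum_h\la^h u^h$ while staying in $\A^*$.

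Assume toward contradiction that $\la\notin\cup_{i\in\supp(\la)}\bla^i$, so for every $i\in\supp(\la)$ there is a witness chain $C^i=(i_1^i,\ldots,i_{K_i}^i)$, with $i_1^i=i$, each consecutive pair related by envy, and $u^{i_{K_i}^i}(x^{i_1^i})\ge \tu^{i_{K_i}^i}-\ep$. In Case~1, some witness chain ends outside the support: $\la^{i_{K_i}^i}=0$. Then the cyclic permutation $y^{i_k^i}=x^{i_{k+1}^i}$ for $k<K_i$, $y^{i_{K_i}^i}=x^{i_1^i}$, and $y^h=x^h$ otherwise, lies in $\A^*$ (intermediate envies and the absorber condition preserve $\ep$-IR). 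The change in $\sum_h\la^h u^h$ equals $\sum_{k=1}^{K_i-1}\la^{i_k^i}[u^{i_k^i}(x^{i_{k+1}^i})-u^{i_k^i}(x^{i_k^i})]+\la^{i_{K_i}^i}[u^{i_{K_i}^i}(x^{i_1^i})-u^{i_{K_i}^i}(x^{i_{K_i}^i})]$; the last term vanishes since $\la^{i_{K_i}^i}=0$, the intermediate terms are nonnegative (envy), and the $k=1$ term is strictly positive because $\la^{i_1^i}>0$ and $i_1^i$ envies $i_2^i$. This contradicts $x=\phi(\la)$.

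In Case~2 every witness chain ends inside $\supp(\la)$, so pick one witness chain per agent and define $\sigma:\supp(\la)\to\supp(\la)$ by letting $\sigma(i)$ be its endpoint. Iterating $\sigma$ from any point in the finite set $\supp(\la)$ eventually enters a cycle $j_1,\ldots,j_p$ with $\sigma(j_l)=j_{l+1\ (\mathrm{mod}\,p)}$. Concatenating the chains $C^{j_1},\ldots,C^{j_p}$ along their shared $j_l$-junctions yields a closed walk $W$ in the envy digraph: every consecutive pair in $W$ is an envy edge, and $W$ visits each $j_l$. Decompose $W$ into edge-disjoint simple directed cycles (a standard combinatorial fact for closed walks: iteratively excise inner loops at repeated vertices). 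Because $W$ contains every $j_l$ and each vertex of $W$ lies in at least one of the simple cycles of the decomposition, some simple cycle $Z=(v_0,v_1,\ldots,v_q=v_0)$ contains some $j_l$. Performing the cyclic permutation $y^{v_r}=x^{v_{r+1\ (\mathrm{mod}\,q)}}$ (with $y^h=x^h$ off $Z$) yields a permutation of $x$, and every $v_r$ on $Z$ strictly gains because the edges of $Z$ are envy edges, so $y$ is in $\A^*$. The weighted change $\sum_r\la^{v_r}[u^{v_r}(x^{v_{r+1}})-u^{v_r}(x^{v_r})]$ has all terms nonnegative, and the term corresponding to $j_l$ is strictly positive since $\la^{j_l}>0$ and $j_l$ envies its successor in $Z$. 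Again this contradicts $x=\phi(\la)$.

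The main obstacle is the combinatorial step in Case~2: the naive ``big cyclic permutation'' on $\{j_1,\ldots,j_p\}$ alone may fail to increase the weighted objective (an absorber $j_l$ whose $\la$-weight is positive could lose a lot of utility), so one must exploit the intermediate agents of the concatenated chains. The key idea is that the entire closed walk is built out of \emph{envy} edges (not merely exchange edges), so any simple subcycle of it supports a cyclic permutation in which every participating agent strictly gains—and one only needs a single such subcycle to meet $\supp(\la)$ in order to produce a strict improvement.
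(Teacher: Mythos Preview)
Your proof is correct and follows essentially the same two-case strategy as the paper's own argument: Case~1 (some witness chain ends at an agent with zero weight) and Case~2 (all chains end inside $\supp(\la)$, so the endpoint map has a cycle and one concatenates chains into a closed envy walk) are handled identically in spirit. The only difference is the combinatorial device in Case~2: the paper shortens the concatenated walk by iteratively excising the segment between two occurrences of a repeated agent, whereas you decompose the closed walk into edge-disjoint simple cycles and observe that, since every edge (hence every vertex) of $W$ is covered, some simple cycle must contain a $j_l\in\supp(\la)$. Both devices extract a simple envy cycle meeting $\supp(\la)$; your version makes the ``positive-weight agent survives'' step more explicit than the paper does.
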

\begin{proof}
	Suppose, towards a contradiction, that for some $ \la \in \Delta$,
	$\la \notin \cup_{i\in \supp(\la)} \bla^i  $. Let $x=\phi(\la)$. Then
	for every $i\in \supp (\la)$, there exists some $j$ such that $ i $ has $\ep$-justified envy by exchange towards $j$ at $x$. Suppose first that there exists such $j$, with corresponding sequence $(i_k)_{k=1}^K$, in which $\la^{i_K}=0$.  Let $y$ be the allocation obtained from $x$ by letting each $i_k$ get $x^{i_{k+1}}$ ($ 1\le k\le K-1 $) and $i_K$ get $x^i$. Clearly $y$ is $\ep$-IR, as $x$ was $\ep$-IR and
	$u^{i_K}(x^i) > \tu^{i_K}-\ep$. Note that $\la^{i_K}=0$ and $u^{i_k}(y^{i_k})>u^{i_k}(x^{i_k})$ for all $ 1\le k\le K-1 $ imply that $\sum_{h\in I} \la^h u^h(x^h) < \sum_{h\in I} \la^h
	u^h(y^h)$. We also have that  $\sum_{h\in I} \norm{x^h - \one} = \sum_{h\in I} \norm{y^h -\one}$, hence \[
	\sum_{h\in I} \la^h u^h(x^h) -\delta \sum_{h\in I} \norm{x^h - \one} < \sum_{h\in I} \la^h u^h(y^h) -\delta  \sum_{h\in I} \norm{y^h -\one},\] 
	 which contradicts the definition of $ x=\phi(\la)$.
	
	The above argument means that every $i\in \supp (\la)$ has $\ep$-justified 
	envy by exchange towards some agent $ j $, with corresponding sequence $(i_k)_{k=1}^K$ in which $\la^{i_K}>0$. Thus, $ i_K\in \supp (\la) $. But it means that $ i_K $ also has $\ep$-justified 
	envy by exchange towards some agent $ j' $, with corresponding sequence $(i'_k)_{k=1}^K$ in which $\la^{i'_K}>0$. Since the set of agents in $\supp(\la)$ is finite, there must exist a subset of agents $\{h_1,\ldots h_M\}\subseteq \supp(\la)$ such that $h_1$ has $\ep$-justified envy by exchange towards some agent with $ h_2 $ being the end of the corresponding sequence, $h_2$ has $\ep$-justified envy by exchange towards some agent with $h_3$ being the end of the corresponding sequence, and so on until $h_M$ has $ \ep $-justified envy by exchange towards some agent with $h_1$ being the end of the corresponding sequence. We write this situation as the following cycle
	\[
	h_1\rightarrow \cdots \rightarrow h_2 \rightarrow \cdots \rightarrow h_3 \rightarrow \cdots \rightarrow \cdots \rightarrow  h_M\rightarrow \cdots \rightarrow h_1,
	\]where $ a\rightarrow b $ means that $ a $ envies $ b $, and $ h_k\rightarrow \cdots \rightarrow h_{k+1} $ is the corresponding sequence of $ h_k $'s $\ep$-justified 
	envy by exchange towards some agent. Now note that if an agent $h$ appears more than once in the above cycle, we can shorten the cycle by skipping the agents between any two consecutive positions of $ h $ in the cycle. So we can, without loss of generality, focus on the cycle in which  each agent appears once. If we carry out the exchange in the cycle as in the proof of Lemma~\ref{lem:KKMconditions}, then we obtain an improvement on the objective that defines $\phi$. This is a contradiction.
\end{proof}

The remaining part of the proof is same as before.

\subsection{Theorem~\ref{thm:NJEeq}} 
Let $(x,p)$ be an equilibrium in Theorem~\ref{thm:NJEeq}. Suppose some agent $i$ has justified envy by exchange towards some agent $j$, with the sequence $(i_k)_{k=1}^K$ being as in the definition of such envy. By our construction of income functions, $m^{i}(p)<m^{j}(p)<m^{i_2}(p)<\cdots<m^{i_K}(p)$. So it must be that $m^{i_K}(p)=e^{i_K}(\tu^{i_K},p)$. It means that $x^i$ is not acceptable to $i_K$, which is a contradiction. So $x$ satisfies no justified envy by exchange. Then the third statement of Theorem~\ref{thm:existence} can be proved as before.

\bibliographystyle{econometrica}
\bibliography{envy}

\end{document}